\newcommand{\bra}[1]{{\left\langle{#1}\right\vert}}
\newcommand{\ket}[1]{{\left\vert{#1}\right\rangle}}
\newcommand{\qw}[1][-1]{\ar @{-} [0,#1]}
\newcommand{\qwx}[1][-1]{\ar @{-} [#1,0]}
\newcommand{\gate}[1]{*+<.6em>{#1} \POS ="i","i"+UR;"i"+UL **\dir{-};"i"+DL **\dir{-};"i"+DR **\dir{-};"i"+UR **\dir{-},"i" \qw}
\newcommand{\meter}{*=<1.8em,1.4em>{\xy ="j","j"-<.778em,.322em>;{"j"+<.778em,-.322em> \ellipse ur,_{}},"j"-<0em,.4em>;p+<.5em,.9em> **\dir{-},"j"+<2.2em,2.2em>*{},"j"-<2.2em,2.2em>*{} \endxy} \POS ="i","i"+UR;"i"+UL **\dir{-};"i"+DL **\dir{-};"i"+DR **\dir{-};"i"+UR **\dir{-},"i" \qw}
\newcommand{\control}{*!<0em,.025em>-=-<.2em>{\bullet}}
\newcommand{\ctrl}[1]{\control \qwx[#1] \qw}
\newcommand{\targ}{*+<.02em,.02em>{\xy ="i","i"-<.39em,0em>;"i"+<.39em,0em> **\dir{-}, "i"-<0em,.39em>;"i"+<0em,.39em> **\dir{-},"i"*\xycircle<.4em>{} \endxy} \qw}
\newcommand{\qswap}{*=<0em>{\times} \qw}
\newcommand{\multigate}[2]{*+<1em,.9em>{\hphantom{#2}} \POS [0,0]="i",[0,0].[#1,0]="e",!C *{#2},"e"+UR;"e"+UL **\dir{-};"e"+DL **\dir{-};"e"+DR **\dir{-};"e"+UR **\dir{-},"i" \qw}
\newcommand{\ghost}[1]{*+<1em,.9em>{\hphantom{#1}} \qw}
\newcommand{\lstick}[1]{*!R!<.5em,0em>=<0em>{#1}}
\newcommand{\Qcircuit}{\xymatrix @*=<0em>}
\newcommand{\uma}[1]{\textcolor{red}{Uma:#1}}
\newcommand{\vabs}[1]{\left\Vert #1 \right\Vert}
\DeclareMathAlphabet\mathbfcal{OMS}{cmsy}{b}{n}
\tikzset{snake it/.style={decorate, decoration=snake}}
\tikzset{
    >=stealth',
    punkt/.style={
           rectangle,
           rounded corners,
           draw=black, very thick,
           text width=6.5em,
           minimum height=2em,
           text centered},
    pil/.style={
           ->,
           thick,
           shorten <=2pt,
           shorten >=2pt,},
  on each segment/.style={
    decorate,
    decoration={
      show path construction,
      moveto code={},
      lineto code={
        \path [#1]
        (\tikzinputsegmentfirst) -- (\tikzinputsegmentlast);
      },
      curveto code={
        \path [#1] (\tikzinputsegmentfirst)
        .. controls
        (\tikzinputsegmentsupporta) and (\tikzinputsegmentsupportb)
        ..
        (\tikzinputsegmentlast);
      },
      closepath code={
        \path [#1]
        (\tikzinputsegmentfirst) -- (\tikzinputsegmentlast);
      },
    },
  },
  mid arrow/.style={postaction={decorate,decoration={
        markings,
        mark=at position .5 with {\arrow[#1]{stealth'}}
      }}}
}
\mathchardef\mhyphen="2D
\newcommand{\forr}{\mathrm{forr}}
\newcommand{\PDT}{\mathsf{PDT}}
\newcommand{\RPDT}{\mathsf{RPDT}}
\newcommand{\PSM}{\mathsf{PSM}}
\newcommand{\PSQM}{\mathsf{PSQM}}
\newcommand{\na}{\mathrm{na}}
\newcommand{\pub}{\mathsf{pub}}
\newcommand{\R}{\mathsf{R}}
\newcommand{\Rent}{\mathsf{R}\|^*}
\newcommand{\Rsim}{\mathsf{R}\|}
\newcommand{\Qent}
{\mathsf{Q}\|^*}
\newcommand{\Qsim}{\mathsf{Q}\|}
\newcommand{\Dsim}{\mathsf{D}\|}
\newcommand{\Qone}{\mathsf{Q}1}
\newcommand{\Rone}{\mathsf{R}1}
\newcommand{\Qtwo}{\mathsf{Q}2}
\newcommand{\Rtwo}{\mathsf{R}}
\newtheorem{theorem}{Theorem}
\newtheorem{corollary}[theorem]{Corollary}
\newtheorem{definition}[theorem]{Definition}
\newtheorem{lemma}[theorem]{Lemma}
\newtheorem{remark}[theorem]{Remark}
\newenvironment{proof}[1][Proof]{\noindent\textbf{#1. }}{\ \rule{0.5em}{0.5em}}
\newcommand{\am}[1]{\textcolor{blue}{#1}}
\begin{document}

\title{Magic and communication complexity}

\author[1]{Uma Girish}
\email{ug2150@columbia.edu}
\orcid{0000-0003-3055-9406}

\author[2,3]{Alex May}
\email{amay@perimeterinstitute.ca}
\orcid{0000-0002-4030-5410}

\author[1]{Natalie Parham}
\email{natalie@cs.columbia.edu}
\orcid{}

\author[1]{Henry Yuen}
\email{hyuen@cs.columbia.edu}
\orcid{0000-0002-2684-1129}

\affiliation[1]{Columbia University}
\affiliation[2]{Perimeter Institute for Theoretical Physics}
\affiliation[3]{Institute for Quantum Computing, University of Waterloo}

\abstract{
We establish novel connections between magic in quantum circuits and communication complexity. In particular, we show that functions computable with low magic have low communication cost.

Our first result shows that the $\Dsim$ (deterministic simultaneous message passing) cost of a Boolean function $f$ is at most the number of single-qubit magic gates in a quantum circuit computing $f$ with any quantum advice state. If we allow mid-circuit measurements and adaptive circuits, we obtain an upper bound on the two-way communication complexity of $f$ in terms of the magic + measurement cost of the circuit for $f$. 
As an application, we obtain magic-count lower bounds of $\Omega(n)$ for the $n$-qubit generalized Toffoli gate as well as the $n$-qubit quantum multiplexer.


Our second result gives a general method to transform $\Qent$ protocols (simultaneous quantum messages with shared entanglement) into $\Rent$ protocols (simultaneous classical messages with shared entanglement) which incurs only a polynomial blowup in the communication and entanglement complexity, provided the referee's action in the $\Qent$ protocol is implementable in constant $T$-depth. The resulting $\Rent$ protocols satisfy strong privacy constraints and are $\PSM^*$ protocols (private simultaneous message passing with shared entanglement), where the referee learns almost nothing about the inputs other than the function value. As an application, we demonstrate $n$-bit partial Boolean functions whose $\Rent$ complexity is $\mathrm{polylog}(n)$ and whose $\R$ (interactive randomized) complexity is $n^{\Omega(1)}$, establishing the first exponential separations between $\Rent$ and $\R$ for Boolean functions.  
}

\maketitle


\tableofcontents


\section{Introduction and summary}

We explore the connection between magic (non-Clifford) gates in quantum computation and communication complexity. 
A central result of this work is that functions computable with low magic are also easy from the perspective of communication complexity.
In particular, we find that in several models a small magic gate count leads to a small communication cost.
Several variations on this theme occur, with low magic count in differing models of computation leading to low communication cost in a corresponding model of communication. 
As another observation at the interface of magic and communication, we find that if a communication protocol itself uses low-magic computations, it can in some contexts be transformed to use weaker resources. 
In particular, with the use of entanglement, a simultaneous quantum message passing protocol in which the referee uses low magic operations can efficiently be made to use only classical messages, and furthermore made to have a strong privacy property. 

Magic gates play a special role in quantum computation. 
In particular, Clifford$+T$ is the most widely considered gate set, and is the basis for many fault-tolerant quantum computation schemes. 
Typically in these schemes Clifford gates are implemented directly by acting on the encoded qubits, while the $T$ gates are implemented by preparing and injecting magic states, see e.g. \cite{bravyi2005universal}. 
This distinction makes $T$ gates particularly costly, leading to interest in understanding how many of them are really necessary for a given computation. 
As well, circuits with low numbers of magic gates can be efficiently simulated by a classical computer \cite{gottesman1998heisenberg,aaronson2004improved}, highlighting the role of magic gates in quantum advantage.

Communication complexity is an important tool for lower bounding classical computational models. 
For instance, classical two-way communication complexity is a lower bound on decision tree complexity \cite{nisan1993communication}. 
Another example is the Karchmer-Widgerson technique which relates circuit lower bounds for a function $f$ to the communication complexity of a relation determined by $f$ \cite{karchmer1988monotone}.\footnote{See \cite[Chapters 9-11]{kushilevitz1997communication} for a review of classical complexity lower bounds based on communication complexity.}
Given this, it is natural to ask if communication complexity can also be used to lower bound quantum computational complexity. 
Our lower bounds begin to address this question. 

One antecedent to our work occurs in the context of non-local quantum computation (NLQC) \cite{speelman2015instantaneous}, where it was observed that unitaries with low $T$-depth can be implemented efficiently. 
The NLQC setting involves a single simultaneous round of communication, and asks for the application of a unitary, so is somewhat different than more standard communication complexity settings. 
Nonetheless we find techniques from this earlier result to be useful in our context, where we upper bound communication cost in terms of $T$-depth or magic gate count.

\textbf{Note:} During the preparation of this manuscript we realized similar results to two of our magic gate lower bounds (proven in \Cref{sec:magiclowerbounds}) had been proven concurrently by another group, see \cite{gosset2025multi}. 
They use a different technique to obtain similar bounds on the $T$-count in the unitary and mixed settings.
We discuss the relationship of our results at the end of \Cref{sec:magiclowerbounds}.

\subsection{Communication \& magic gate count}


We show that in several settings, Boolean functions computed by low magic quantum circuits have small communication complexity. 
In this section, we outline the proof of the simplest such relationship, and then state some variations.

Consider a (partial or total\footnote{A total Boolean function is defined on all points in $\{0,1\}^n$, while a partial Boolean function may be defined only a subset of inputs and we only care about computing the function on the support.}) Boolean function $f:\{0,1\}^n\to \{0,1\}$. The simplest computational setting to consider is the unitary model, where we consider a quantum circuit $C$ that takes as input $\ket{z}$ in the computational basis for $z\in\{0,1\}^n$, and finally measures the first qubit in the computational basis to obtain the output. 
We also allow the circuit to take in an additional quantum state $\ket{\psi}$, which can be arbitrarily large and can begin in an arbitrary state.
We call this the advice state. 
We say the circuit $C$ computes $f$ with error $\epsilon$ if the measurement outcome yields $f(z)$ with probability at least $1-\epsilon$ for $\epsilon<1/2$. 
We will establish a lower bound on the number of magic gates needed in this model in terms of the deterministic simultaneous message passing communication cost, which we label the $\Dsim$ model. 
See \Cref{fig:SMP}.

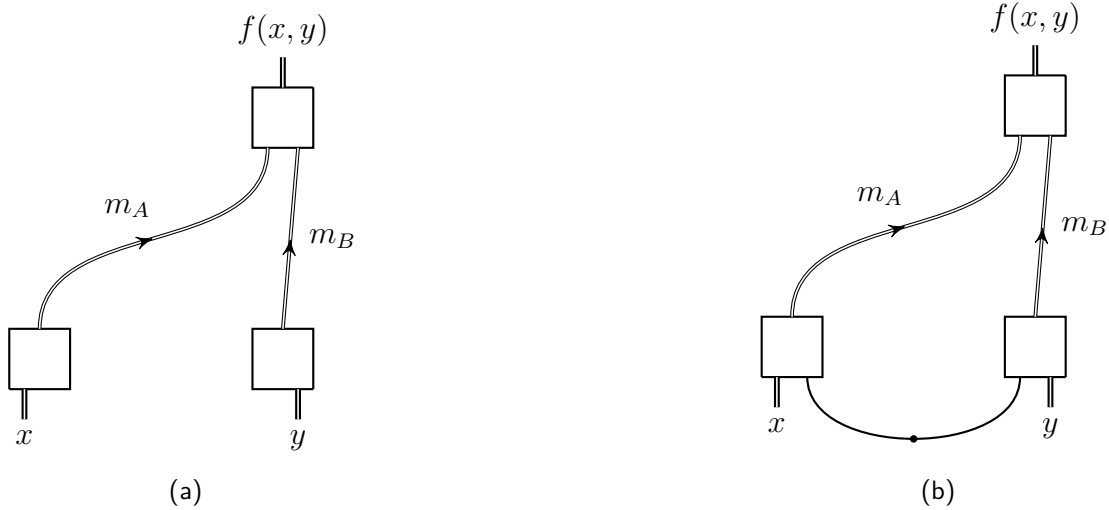
\begin{figure*}
    \centering
    \begin{subfigure}{0.45\textwidth}
    \centering
    \begin{tikzpicture}[scale=0.4]
    
    \draw[thick] (-5,-5) -- (-5,-3) -- (-3,-3) -- (-3,-5) -- (-5,-5);
    
    \draw[thick] (5,-5) -- (5,-3) -- (3,-3) -- (3,-5) -- (5,-5);
    
    \draw[thick] (5,5) -- (5,3) -- (3,3) -- (3,5) -- (5,5);
    
    \draw[double, mid arrow] (4,-3) -- (4.5,3);
    
    \draw[double, mid arrow] (-4,-3) to [out=90,in=-90] (3.5,3);
    
    \draw[double, thick] (-4.5,-6) -- (-4.5,-5);
    \node[below] at (-4.5,-6) {$x$};
    
    \draw[double, thick] (4.5,-6) -- (4.5,-5);
    \node[below] at (4.5,-6) {$y$};

    \node[left] at (0,1) {$m_A$};
    \node[right] at (4.5,0) {$m_B$};
    
    \draw[double, thick] (4,5) -- (4,6);
    \node[above] at (4,6) {$f(x,y)$};
    
    \end{tikzpicture}
    \caption{}
    \label{fig:SMP}
    \end{subfigure}
    \hfill
    \begin{subfigure}{0.45\textwidth}
    \centering
    \begin{tikzpicture}[scale=0.4]
    
    \draw[thick] (-5,-5) -- (-5,-3) -- (-3,-3) -- (-3,-5) -- (-5,-5);
    
    \draw[thick] (5,-5) -- (5,-3) -- (3,-3) -- (3,-5) -- (5,-5);
    
    \draw[thick] (5,5) -- (5,3) -- (3,3) -- (3,5) -- (5,5);
    
    \draw[double, mid arrow] (4,-3) -- (4.5,3);
    
    \draw[double, mid arrow] (-4,-3) to [out=90,in=-90] (3.5,3);
    
    \draw[thick] (-3.5,-5) to [out=-90,in=-90] (3.5,-5);
    \draw[black] plot [mark=*, mark size=3] coordinates{(0,-7.05)};

    \node[left] at (0,1) {$m_A$};
    \node[right] at (4.5,0) {$m_B$};
    
    \draw[double, thick] (-4.5,-6) -- (-4.5,-5);
    \node[below] at (-4.5,-6) {$x$};
    
    \draw[double, thick] (4.5,-6) -- (4.5,-5);
    \node[below] at (4.5,-6) {$y$};
    
    \draw[double, thick] (4,5) -- (4,6);
    \node[above] at (4,6) {$f(x,y)$};
    
    \end{tikzpicture}
    \caption{}
    \label{fig:PSM*}
    \end{subfigure}
    \caption{a) The simultaneous message passing ($\Dsim$) setting. Alice receives input $x\in\{0,1\}^n$, Bob receives input $y\in\{0,1\}^n$, and the referee should output $f(x,y)$. Alice and Bob can not communicate with one another, but can each send a message to the referee. The $\Dsim$ cost is the minimal number of bits of communication Alice and Bob must send. b) The $\PSM^*$ model, which has the same communication pattern as $\Dsim$. In $\PSM^*$, Alice and Bob may share entanglement, which we indicate with the lower curved wire. We restrict the communication to be classical, which is indicated by the double-lined wires. Further, the messages are required to be \emph{private}, meaning that the referee should learn $f(x,y)$ but no other information about $(x,y)$.} 
    \label{fig:SMPandPSM}
\end{figure*}

A useful starting point is to begin with the case where the circuit is Clifford.
Consider an arbitrary division of the input $z$ into $(x,y)$, and ask about the $\Dsim$ cost of computing $f(x,y)$. 
Label the circuit computing $f$ by $C_{ABE}$, where $A$ is a system holding Alice's input, $B$ is a system holding Bob's input, and $E$ is an advice system which may be prepared in an arbitrary state.
We consider having the referee run this circuit on the all-zeroes input, which we can view as
\begin{align}
    C_{ABE}\ket{0}_A\ket{0}_B\ket{\psi}_E &= C_{ABE}X^{\vec{x}}\ket{x}_AX^{\vec{y}}\ket{y}_B\ket{\psi}_E \nonumber \\
    &= \sigma_{ABE}[x,y] C_{ABE}\ket{x}_A\ket{y}_B\ket{\psi}_E.
\end{align}
In the last equality, we have conjugated the Pauli string that sets $\ket{x}\ket{y}$ back to the all-zeroes state through the Clifford, returning a Pauli string acting on the outputs of the circuit. 
In this viewpoint, the circuit is running on the correct inputs. 

The key observation is that to learn $f(x,y)$, we only need to undo the Pauli acting on the first qubit, which is the only one which will be measured.
In fact, a possible $Z$ correction will not disturb the measurement outcome and can be left uncorrected, so we only need to determine the single bit which controls a possible $X$ correction on the output qubit. 
This bit is determined by the parity of a subset of the input bits: on each input the Pauli $X$ conjugates through to the outputs in some way, sometimes giving an $X$ correction on the output qubit. 
Thus $f(x,y)$ is determined by a single parity function $p(x,y)=\sum_{i\in S_A}x_i+ \sum_{i\in S_B}y_i$.
The referee can compute this parity function by having Alice and Bob send the single bits $\sum_{i\in S_A}x_i$ and $\sum_{i\in S_B}y_i$ respectively, so the Clifford case has $\Dsim$ cost of $2$.\footnote{In fact, this observation is already made (up to small differences) in \cite{buhrman2006new}. Our contribution is to extend a similar strategy to the case with $T$-gates.}

We can generalize this strategy to the case of circuits that use magic gates, at the cost of adding a constant number of parity functions, and hence constant $\Dsim$ communication cost, per magic gate. 
With magic gates present, we can compute $f(x,y)$ similarly to before: run the circuit on the all-zeroes input, and conjugate the string of $X$ corrections that would make this the correct input through to the first magic gate appearing in the circuit. 
Undo the Pauli corrections before the first magic gate. 
The identity of the Pauli correction depends on $2c_M$ parities of the input, where $c_M$ is the number of qubits on which the magic gate acts. 
Continue in this way, correcting the Paulis only on the wires before each magic gate until reaching the measurement, which requires $1$ additional parity value. 
Thus we obtain an upper bound of $4c_M\cdot \mathcal{M}_{\epsilon,c_M}^{\text{unitary}}(f) + 2$ on the $\Dsim$ complexity, where $\mathcal{M}_{\epsilon,c_M}^{\text{unitary}}(f)$ is the number of magic gates in any circuit that computes $f$ with probability $1-\epsilon$, where the magic gates act on at most $c_M$ qubits. 
Thus we obtain the lower bound
\begin{align}
    \frac{1}{4c_M}\left(\Dsim(f)-2\right) \leq \mathcal{M}^{\text{unitary}}_{\epsilon,c_M}(f)\quad\text{for all }\epsilon<1/2.
\end{align}
This implies that the number of magic gates needed to compute $f$ with any error $\epsilon<1/2$ is essentially lower bounded by the $\Dsim$ communication complexity.
See \Cref{thm:unitarymagiclowerbound} in the main text.

We can also improve this bound for the $T$-gate count: in this case $c_M=1$, but we also notice that since $T$ gates commute with Pauli $Z$, we can leave $Z$'s uncorrected as we move through the circuit. Thus, we have
\begin{align}
    \frac{1}{2}(\Dsim(f)-2) \leq \mathcal{T}_{\epsilon<1/2}^{\text{unitary}}(f).
\end{align}

We can use a similar technique to upper bound the communication cost in terms of magic gate count in other computational models. 
One modification of the above is to consider the \emph{mixed unitary model}, where we allow quantum operations of the form
\begin{align}
    \mathcal{N}(\cdot) = \sum_i p_i U_i(\cdot) U^\dagger_i.
\end{align} 
We say that the above channel computes $f$ if measuring the first qubit yields $f(z)$ with probability at least $1-\epsilon$. 
Let $\mathcal{M}^{\text{mixed}}_{\epsilon,c_M}(f)$ denote the maximum number of magic gates used by the $U_i$, where each magic gate acts on at most $c_M$ qubits. 
Then we obtain that
\begin{align}
    \frac{1}{4c_M}\left(\Rsim^{\pub}_\epsilon(f)-2 \right) &\leq \mathcal{M}^{\text{mixed}}_{\epsilon,c_M}(f),
\end{align}
where the communication model $\Rsim^{\pub}_\epsilon$ now allows public randomness, and has the same error probability $\epsilon$ as the circuit. 
See \Cref{thm:mixedlowerbound} in the main text. 

A second variation is to consider an adaptive Clifford+magic gate model. 
In this model, we allow an arbitrary advice state as before, as well as mid-circuit measurements. 
Further gates can then be applied adaptively, where we condition on mid-circuit measurement outcomes. 
This model allows, for instance, the use of magic state injection and mimics the model expected to be implemented in a fault tolerant quantum computer. 
In this setting we take the cost to be the number of magic gates plus the number of mid-circuit measurements.\footnote{Note that because we allow an arbitrary advice state, which could include magic states, we cannot hope to obtain a lower bound purely in terms of the magic gate count alone.} 
We denote this cost by $\mathcal{M}^{\text{adaptive}}_{\epsilon,c_M}(f)$. 
We obtain the following lower bound
\begin{align}
    \frac{1}{2c_M}(\R_\epsilon(f)-1) \leq \mathcal{M}_{\epsilon,c_M}^{\text{adaptive}}(f).
\end{align}
Here $\R_\epsilon(f)$ denotes the two-way classical communication complexity of computing $f$ with probability $1-\epsilon$.
See \Cref{thm:adaptivelowerbound} in the main text. 

\paragraph*{Applications:} Because well developed lower bound strategies are known for classical communication complexity, we obtain bounds on the magic gate complexity of many explicit Boolean functions. 
Somewhat less directly, we can also use our Boolean function lower bounds to bound the magic gate complexity of unitaries. 
To do this, the strategy is to find Boolean functions with large communication complexity that are computed (with small magic overhead) by the unitary of interest. 

One unitary of interest is the $n$-qubit Toffoli, which acts according to
\[
    \mathrm{Toffoli}_n : \ket{x_1,\ldots,x_n,b} \mapsto \ket{x_1,\ldots,x_n,b \oplus \bigwedge_{i=1}^n x_i}~.
\]
This can be used to compute the equality function with no magic overhead: we take the bit-wise XOR of the input strings, and negate every bit of output. 
The result is the all 1's string iff the input strings are equal, which we check using the generalized Toffoli. 
The $\Dsim$ complexity of equality is $n$, so this gives a $\Omega(n)$ lower bound on the magic gate count of the Toffoli in the exact case. 
Considering implementing Toffoli aproximately in the mixed model, the relevant communication lower bound is a $\Omega(\min\{n,1/\epsilon\})$ lower bound on $\Rsim$, leading to the same lower bound on the magic gate count in that case. 
The concurrent work \cite{gosset2025multi} also find these lower bounds, and in fact gives nearly matching upper bounds. 

Another unitary of interest is the quantum multiplexer, which acts according to
\[
    \mathrm{Multiplex}_n: \ket{i,x,b}  \mapsto \ket{i,x_1,\ldots,x_{i-1},b,x_{i+1},\ldots,x_n,x_i}~.
\]
In words, the quantum multiplexer coherently swaps the bit $b$ into the register labelled by $i$. 
The quantum multiplexer can be used, with zero magic overhead, to compute the index function, Index$_n(x,i)=x_i$.
Since Index$_n$ has a $\Omega(n)$ lower bound in the $\Rsim$ model, this gives an $\Omega(n)$ lower bound on the magic gate count to implement the quantum multiplexer in the mixed model.

\subsection{Communication \& magic depth}


In this part, we study simultaneous message passing protocols with constant error $\epsilon=1/3$, where Alice and Bob share entanglement. We consider two models, namely $\Qent$, where the messages to the referee are quantum and $\Rent$, where the messages are classical. Building on techniques from the non-local quantum computation literature~\cite{speelman2015instantaneous}, we give a general technique to convert $\Qent$ protocols into $\Rent$ protocols.
The cost of the conversion is determined by the complexity of the referee's action. 
In particular, if the $T$-depth of the referee's actions in the $\Qent$ protocol is $O(1)$, then the conversion only incurs a polynomial overhead in the entanglement and communication complexity. More formally, if the referee receives $m$ qubits from Alice and Bob, uses $a$ ancillary qubits of quantum advice, implements a unitary of $T$-depth $d$, and finally measures the first qubit to obtain the output, we show that
\begin{align}
    \Rent(f)\le (O(m+a))^d.
\end{align}
Additionally, our $\Rent$ protocol has strong privacy conditions and is in fact also a $\PSM^*$ protocol, where the referee learns almost nothing about Alice's and Bob's inputs except for the output of the function.
See figure \ref{fig:PSM*} for an illustration of the $\PSM^*$ model.
See~\Cref{thm:Q||toPSM} for a formal statement of our transformation.  

The main idea behind the proof of~\Cref{thm:Q||toPSM} is to 
apply a technique from \cite{speelman2015instantaneous}. 
Heuristically, we have Alice and Bob in the $\PSM^*$ protocol themselves (nearly) implement what was previously the referee's operation in the $\Qent$ protocol, and leave only certain simple correction operations to be performed by the referee. 
The data needed for these simple correction operations turn out to naturally be classical bits, and to reveal only the function value. 

In more detail, we build a $\PSM^*$ protocol from a $\Qent$ protocol as follows. 
Alice and Bob first implement the operations from the $\Qent$ protocol to produce the message systems. 
Then, Bob teleports his message system to Alice, who will attempt to implement the referee's actions. 
If the referee's actions were Clifford, this would be simple: Alice applies the needed Clifford, measures the output qubit, and produces a measurement outcome which she sends to the referee. 
Using this measurement outcome and the teleportation outcomes from Bob, the referee can determine $f(x,y)$. 
With $T$ gates present the situation is more complicated. 
However, \cite{speelman2015instantaneous} shows how to use repeated teleportations between Alice and Bob to have Alice apply $T$ gates instantaneously, before the communication, at least up to Pauli corrections. 
The Pauli corrections are determined as a function of all of the teleportation measurement outcomes. 
Further, the number of these measurement outcomes that must be communicated grows in a controlled way as the $T$-depth of the circuit applied by Alice increases. 
Thus if Alice makes the needed measurement and both Alice and Bob communicate their teleportation measurement outcomes, then the referee can determine $f(x,y)$, giving a $\Rent$ protocol. 
It is not too difficult to show that these messages store (almost) no data about $(x,y)$ except the value of $f(x,y)$. 
Indeed, teleportation measurement outcomes are uniformly random and reveal nothing about the input. 
The only step that reveals information is Alice's bit, but since she only reveals a single bit that correlates highly with $f(x,y)$, she doesn't end up revealing more information. 
For more details on the proof, see~\Cref{sec:Q||toPSM}. 

\paragraph*{Applications.}
Our result has applications to quantum speedups in communication complexity. 
We begin by providing a brief overview of this field and motivating our results in this context. 

The study of quantum speedups in communication complexity has a long and rich history. Numerous works~\cite{raz1999exponential,gavinskyetal,klartagregev,gavinsky2016entangled,girish2022quantum,arunachalam2023one} have shown partial Boolean functions for which quantum communication provides exponential speedups over classical communication. 
Each subsequent work either strengthens the classical lower bound or weakens the resources required by the quantum protocol. 
The best known prior separations for partial Boolean functions are due to~\cite{gavinsky2016entangled,girish2022quantum,arunachalam2023one} which prove that $\Qent$ can exponentially outperform $\R$ (randomized interactive communication). 
These are depicted in~\Cref{fig:2}. 

Despite decades of work, our understanding of quantum communication speedups is far from complete (see~\cite{gavinsky2019,barequantum20} for a list of open problems). 
In particular, we paraphrase two open questions proposed by~\cite{gavinsky2016entangled}:
\begin{enumerate}
    \item Is there a partial function separating $\Rent$ and $\R$? 
    \item Is there a partial function or even a relational problem separating $\Qent$ and $\Rent$? 
\end{enumerate}

In this work, we resolve the first question and make some progress on the second, by using our aforementioned connection between communication complexity and magic depth. Firstly,~\Cref{thm:Q||toPSM} implies that separating $\Qent$ and $\Rent$ requires proving $T$-depth lower bounds on the measurement implemented by the referee in the $\Qent$ protocol.
Secondly, we show the first exponential separation between $\Rent$ and $\R$ for partial Boolean functions. 
We do this by taking existing separations between $\Qent$ and $\R$ where the referee's actions have constant $T$-depth and applying our conversion to obtain $\Rent$ protocols. 
In particular, we use a variant of the distributed Forrelation Problem introduced by~\cite{girish2022quantum} and the ABCD problem introduced by~\cite{arunachalam2023one}. 
These works showed that these problems require $\Rtwo$ protocols of cost $\tilde{\Omega}(n^{1/4})$ and this establishes the desired separation between $\Rent$ and $\Rtwo$. 
The best known prior separation between these models was a relational one~\cite{Gav08}.
Compared to previous separations, it seems notable that separating $\Rent$ and $\R$ uses a quite involved upper bound strategy, in particular the technique of \cite{speelman2015instantaneous}.
In contrast, most prior quantum communication upper bounds use simpler strategies.
As mentioned before, our $\Rent$ protocol has the additional advantage that it is also a $\PSM^*$ protocol.

\tikzset{every picture/.style={line width=0.75pt}} 
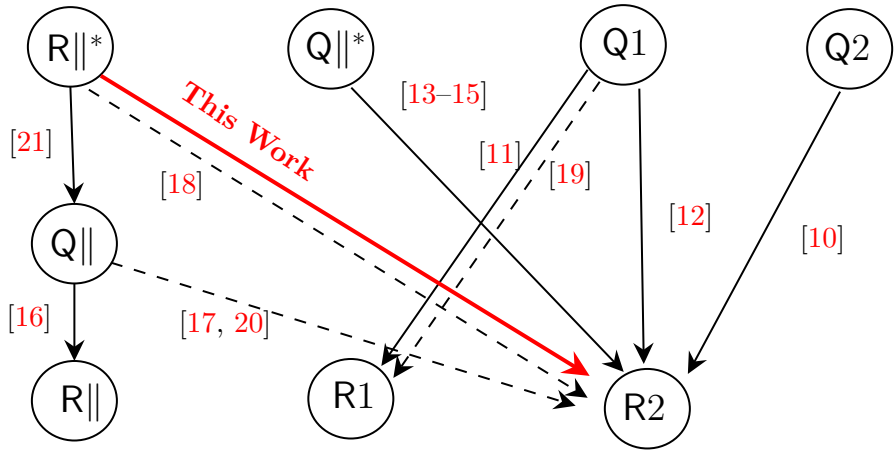
\begin{figure}
    \centering

\tikzset{every picture/.style={line width=0.75pt}} 

\begin{tikzpicture}[x=0.75pt,y=0.75pt,yscale=-1,xscale=1]

\draw   (251,55) .. controls (251,43.95) and (260.5,35) .. (272.22,35) .. controls (283.94,35) and (293.44,43.95) .. (293.44,55) .. controls (293.44,66.05) and (283.94,75) .. (272.22,75) .. controls (260.5,75) and (251,66.05) .. (251,55) -- cycle ;

\draw   (123,153) .. controls (123,141.95) and (132.5,133) .. (144.22,133) .. controls (155.94,133) and (165.44,141.95) .. (165.44,153) .. controls (165.44,164.05) and (155.94,173) .. (144.22,173) .. controls (132.5,173) and (123,164.05) .. (123,153) -- cycle ;

\draw   (121,54) .. controls (121,42.95) and (130.5,34) .. (142.22,34) .. controls (153.94,34) and (163.44,42.95) .. (163.44,54) .. controls (163.44,65.05) and (153.94,74) .. (142.22,74) .. controls (130.5,74) and (121,65.05) .. (121,54) -- cycle ;

\draw   (397,53) .. controls (397,41.95) and (406.5,33) .. (418.22,33) .. controls (429.94,33) and (439.44,41.95) .. (439.44,53) .. controls (439.44,64.05) and (429.94,73) .. (418.22,73) .. controls (406.5,73) and (397,64.05) .. (397,53) -- cycle ;

\draw   (510,55) .. controls (510,43.95) and (519.5,35) .. (531.22,35) .. controls (542.94,35) and (552.44,43.95) .. (552.44,55) .. controls (552.44,66.05) and (542.94,75) .. (531.22,75) .. controls (519.5,75) and (510,66.05) .. (510,55) -- cycle ;

\draw   (261,231) .. controls (261,219.95) and (270.5,211) .. (282.22,211) .. controls (293.94,211) and (303.44,219.95) .. (303.44,231) .. controls (303.44,242.05) and (293.94,251) .. (282.22,251) .. controls (270.5,251) and (261,242.05) .. (261,231) -- cycle ;

\draw   (409,236) .. controls (409,224.95) and (418.5,216) .. (430.22,216) .. controls (441.94,216) and (451.44,224.95) .. (451.44,236) .. controls (451.44,247.05) and (441.94,256) .. (430.22,256) .. controls (418.5,256) and (409,247.05) .. (409,236) -- cycle ;

\draw    (526.44,76.52) -- (451.86,215.88) ;
\draw [shift={(450.44,218.52)}, rotate = 298.16] [fill={rgb, 255:red, 0; green, 0; blue, 0 }  ][line width=0.08]  [draw opacity=0] (10.72,-5.15) -- (0,0) -- (10.72,5.15) -- (7.12,0) -- cycle    ;
\draw    (426.22,75) -- (428.39,210.52) ;
\draw [shift={(428.44,213.52)}, rotate = 269.08] [fill={rgb, 255:red, 0; green, 0; blue, 0 }  ][line width=0.08]  [draw opacity=0] (10.72,-5.15) -- (0,0) -- (10.72,5.15) -- (7.12,0) -- cycle    ;

\draw [color={rgb, 255:red, 0; green, 0; blue, 0 }  ,draw opacity=1 ] [dash pattern={on 4.5pt off 4.5pt}]  (406.44,71.52) -- (305.15,218.05) ;


\draw [color={rgb, 255:red, 0; green, 0; blue, 0 }  ,draw opacity=1 ]   (400.44,65.52) -- (299.15,212.05) ;

\draw [shift={(297.44,214.52)}, rotate = 304.66] [fill={rgb, 255:red, 0; green, 0; blue, 0 }  ,fill opacity=1 ][line width=0.08]  [draw opacity=0] (10.72,-5.15) -- (0,0) -- (10.72,5.15) -- (7.12,0) -- cycle    ;

\draw [shift={(303.44,220.52)}, rotate = 304.66] [fill={rgb, 255:red, 0; green, 0; blue, 0 }  ,fill opacity=1 ][line width=0.08]  [draw opacity=0] (10.72,-5.15) -- (0,0) -- (10.72,5.15) -- (7.12,0) -- cycle    ;
\draw [color={rgb, 255:red, 0; green, 0; blue, 0 }  ,draw opacity=1 ]   (282.44,74.52) -- (416.37,215.35) ;
\draw [shift={(418.44,217.52)}, rotate = 226.44] [fill={rgb, 255:red, 0; green, 0; blue, 0 }  ,fill opacity=1 ][line width=0.08]  [draw opacity=0] (10.72,-5.15) -- (0,0) -- (10.72,5.15) -- (7.12,0) -- cycle    ;
\draw [color={rgb, 255:red, 0; green, 0; blue, 0 }  ,draw opacity=1 ] [dash pattern={on 4.5pt off 4.5pt}]  (151.44,75.52) -- (397.89,228.94) ;
\draw [shift={(400.44,230.52)}, rotate = 211.9] [fill={rgb, 255:red, 0; green, 0; blue, 0 }  ,fill opacity=1 ][line width=0.08]  [draw opacity=0] (10.72,-5.15) -- (0,0) -- (10.72,5.15) -- (7.12,0) -- cycle    ;
\draw [color={rgb, 255:red, 0; green, 0; blue, 0 }  ,draw opacity=1 ] [dash pattern={on 4.5pt off 4.5pt}]  (162.44,162.52) -- (391.58,233.63) ;
\draw [shift={(394.44,234.52)}, rotate = 197.24] [fill={rgb, 255:red, 0; green, 0; blue, 0 }  ,fill opacity=1 ][line width=0.08]  [draw opacity=0] (10.72,-5.15) -- (0,0) -- (10.72,5.15) -- (7.12,0) -- cycle    ;
\draw   (123,232) .. controls (123,220.95) and (132.5,212) .. (144.22,212) .. controls (155.94,212) and (165.44,220.95) .. (165.44,232) .. controls (165.44,243.05) and (155.94,252) .. (144.22,252) .. controls (132.5,252) and (123,243.05) .. (123,232) -- cycle ;
\draw    (144.22,173) -- (144.22,209) ;
\draw [shift={(144.22,212)}, rotate = 270] [fill={rgb, 255:red, 0; green, 0; blue, 0 }  ][line width=0.08]  [draw opacity=0] (10.72,-5.15) -- (0,0) -- (10.72,5.15) -- (7.12,0) -- cycle    ;
\draw    (142.22,74) -- (144.12,130) ;
\draw [shift={(144.22,133)}, rotate = 268.06] [fill={rgb, 255:red, 0; green, 0; blue, 0 }  ][line width=0.08]  [draw opacity=0] (10.72,-5.15) -- (0,0) -- (10.72,5.15) -- (7.12,0) -- cycle    ;
\draw [color={rgb, 255:red, 255; green, 0; blue, 0 }  ,draw opacity=1 ][line width=1.5]    (157.44,68.52) -- (398.04,217.42) ;
\draw [shift={(401.44,219.52)}, rotate = 211.75] [fill={rgb, 255:red, 255; green, 0; blue, 0 }  ,fill opacity=1 ][line width=0.08]  [draw opacity=0] (13.4,-6.43) -- (0,0) -- (13.4,6.44) -- (8.9,0) -- cycle    ;

\draw (128.66,44.97) node [anchor=north west][inner sep=0.75pt]  [font=\large] [align=left] {$\Rent$};
\draw (130.49,143.39) node [anchor=north west][inner sep=0.75pt]  [font=\large] [align=left] {$\Qsim$};
\draw (257.99,44.35) node [anchor=north west][inner sep=0.75pt]  [font=\large] [align=left] {$\Qent$};
\draw (405.94,43.3) node [anchor=north west][inner sep=0.75pt]  [font=\large] [align=left] {$\Qone$};
\draw (271.47,220.93) node [anchor=north west][inner sep=0.75pt]  [font=\large] [align=left] {$\Rone$};
\draw (416.4,226.64) node [anchor=north west][inner sep=0.75pt]  [font=\large] [align=left] {$\Rtwo2$};
\draw (516.81,45.82) node [anchor=north west][inner sep=0.75pt]  [font=\large] [align=left] {$\Qtwo$};
\draw (135.47,221.93) node [anchor=north west][inner sep=0.75pt]  [font=\large] [align=left] {$\Rsim$};
\draw (500,142) node [anchor=north west][inner sep=0.75pt]  [font=\small] [align=left] {~\cite{raz1999exponential}};
\draw (434,131) node [anchor=north west][inner sep=0.75pt]  [font=\small] [align=left] {~\cite{klartagregev}};
\draw (374,110) node [anchor=north west][inner sep=0.75pt]  [font=\small] [align=left] {~\cite{bar04}};

\draw (339,100) node [anchor=north west][inner sep=0.75pt]  [font=\small] [align=left] {~\cite{gavinskyetal}};

\draw (300,70) node [anchor=north west][inner sep=0.75pt]  [font=\small] [align=left] {~\cite{gavinsky2016entangled,girish2022quantum,arunachalam2023one}};
\draw (190,185) node [anchor=north west][inner sep=0.75pt]  [font=\small] [align=left] {~\cite{barequantum20,tfnp}};
\draw (104,182) node [anchor=north west][inner sep=0.75pt]  [font=\small] [align=left] {~\cite{gavinsky2019}};
\draw (105,94) node [anchor=north west][inner sep=0.75pt]  [font=\small] [align=left] {~\cite{tradeoffs23}};
\draw (180.06,115.01) node [anchor=north west][inner sep=0.75pt]  [font=\small] [align=left] {~\cite{Gav08}};
\draw (201.09,67.95) node [anchor=north west][inner sep=0.75pt]  [font=\small,rotate=-33.54] [align=left] {\textbf{\textcolor[rgb]{1,0,0}{This Work}}};

\end{tikzpicture}
    \caption{Quantum versus Classical Communication. Here, an arrow from $A$ to $B$ denotes that $A$ exponentially outperforms $B$ for some task, with solid lines denoting functional tasks and dashed lines denoting relational ones. We use $2$ to denote interactive protocols, $1$ to denote one-way protocols and $\|$ to denote simultaneous protocols.}
    \label{fig:2}
\end{figure}

\vspace{0.2cm}
\noindent \textbf{Acknowledgements:} We thank David Gosset and Robin Kothari for helpful discussions. 
UG and HY are supported by AFOSR award FA9550-23-1-0363, NSF awards CCF-2530159, CCF-2144219, and CCF-2329939, and the Sloan Foundation. NP is supported by the Google PhD Fellowship.
Research at the Perimeter Institute is supported by the Government of Canada through the Department of Innovation, Science and Industry Canada and by the Province of Ontario through the Ministry of Colleges and Universities.

\section{Communication models}

In this section, we define the relevant communication complexity models. We first define a general communication model called the simultaneous message passing model (denoted by parallel bars $\|$) of which the aforementioned $\Dsim,\Qent,\Rent$ models are specific instantiations. See~\Cref{fig:5} for a summary.

\begin{definition}\label{def:SMP}
Let $f : \{0,1\}^n\times \{0,1\}^n\rightarrow \{0,1\}$ be a (partial or total) Boolean function, and $\epsilon \in [0,1]$ be a parameter. A \textbf{simultaneous message passing} protocol $P$ for $f$ involves three parties, Alice, Bob, and a referee. Alice receives $x\in \{0,1\}^n$ as input and Bob receives $y\in \{0,1\}^n$. Alice and Bob send the referee (quantum or classical) message systems $M_A$ and $M_B$ respectively, and the referee subsequently outputs a bit $c=P(x,y)$.

\textbf{Messages.} The messages that Alice and Bob send to the referee can be quantum, denoted by $\Qsim$ or classical, denoted by $\Rsim$.

\textbf{Correctness.} The protocol is $\epsilon$-correct if for all $(x,y)$ in the support of $f$,
\begin{align*}
    \Pr[P(x,y)=f(x,y)] \geq 1-\epsilon \enspace.
\end{align*}
When we drop the subscript $\epsilon$, it means $\epsilon=1/3$. Focusing on the case when Alice and Bob send classical messages and $\epsilon=0$, we obtain the deterministic model of classical simultaneous communication, denoted by $\mathsf{D}\|$.

\textbf{Cost of a protocol.} The cost of the protocol denoted by $\mathrm{cost}(P)$ is defined to be the total number of bits (resp. qubits) sent by Alice and Bob in the $\Rsim$ (resp. $\Qsim$) model. The $\Rsim_\epsilon$ complexity of $f$ is defined as follows
\begin{equation*}
    \Rsim_{\epsilon}(f) = \min_{P: P \text{ is $\epsilon$-correct}}\mathrm{cost}(P)
\end{equation*}
and the $\Qsim_\epsilon$ complexity is analogously defined.

\textbf{Randomness.} Alice and Bob typically have private randomness, but we also consider a variation of the simultaneous message model where we allow public randomness. 
In particular, we allow all three players (Alice, Bob and the referee) to hold a shared random string $r$ of arbitrary length. 
They can then use $r$ as an input to their local operations. 
We label the cost to compute $f$ $\epsilon$-correctly in this model by $\Rsim^{\pub}_\epsilon(f)$ (resp. $\Qsim^{\pub}_\epsilon(f)$) when the messages are classical (resp. quantum).  

\textbf{Entanglement.} We may allow Alice and Bob to share entanglement, denoted by the superscript $*$ and resulting in the models $\Qent$ and $\Rent$ depending on whether the messages to the referee are quantum or classical. 
\end{definition}

\begin{table}
\centering
\begin{tabular}{|l|l|l|l|l|}
\hline
Models                   & Error & Entanglement  & Randomness               & Messages \\
                    &   &   (Alice \& Bob) &              &  to Referee \\ \hline
$\Dsim$                      & 0     & No                          & No                       & Classical           \\ \hline
$\Rsim$                       & 1/3   & No                          & Private             & Classical           \\ \hline
$\Qsim$                      & 1/3   & No                          & Private             & Quantum             \\ \hline
$\Rsim^\pub$  & 1/3   & No                          & Public              & Classical           \\ \hline
$\Qsim^\pub$  & 1/3   & No                          & Public              & Quantum             \\ \hline
$\Rent$  & 1/3   & Yes                         & Subsumed by entanglement & Classical           \\ \hline
$\Qent$   & 1/3   & Yes                         & Subsumed by entanglement & Quantum \\      
\hline
\end{tabular}
\caption{Various models of simultaneous communication}
\label{fig:5}
\end{table}

A variant of the $\Rsim$ model with privacy constraints is the  $\PSM$ (private simultaneous messages) model. Here, the model of communication is identical, but the goal is for the referee to be able to determine $f(x,y)$, but no other information about $x,y$.   
We record a formal definition of $\PSM$ next. 
\begin{definition}\label{def:PSQM}
    A \textbf{private simultaneous message} task is defined by a choice of (partial or total) Boolean function $f:\{0,1\}^n\times \{0,1\}^n\rightarrow \{0,1\}$. Let $\epsilon, \delta \in [0,1]$ be parameters.
    The inputs to the task are $n$-bit strings $x$ and $y$ given to Alice and Bob, respectively.
    Alice then sends a message system $M_0$ to the referee, and Bob sends a message system $M_1$. 
    From the combined message system $M=M_0M_1$, the referee prepares an output bit $z$ whose system is denoted by $Z$.  
    We require the task be completed in a way that satisfies the following two properties.
    \begin{itemize}
        \item \textbf{$\epsilon$-correctness:} There exists a decoding map $\mathbf{V}_{M \rightarrow Z\tilde{M}}$ such that, for all $(x,y)$ in the support of $f$, 
        \begin{align}
            \left \|\tr_{\tilde{M}}(\mathbf{V}_{M \rightarrow Z\tilde{M}} \rho_{M}(x,y) \mathbf{V}_{M \rightarrow Z\tilde{M}}^\dagger ) - \ketbra{f_{x,y}}{f_{x,y}}_Z\right \|_1 \leq \epsilon
        \end{align}
        where $\rho_M(x,y)$ is the density matrix on $M$ produced on inputs $x,y$ and $f_{x,y}=f(x,y)$.
        \item \textbf{$\delta$-security:} There exists a simulator, which is a quantum channel $\mathbfcal{S}_{Z\rightarrow M}(\cdot)$, such that for all $(x,y)$ on which $f$ is defined 
        \begin{align}
            \left \|\rho_{M}(x,y) - \mathbfcal{S}_{Z\rightarrow M}(\ketbra{f_{x,y}}{f_{x,y}}_Z)\right \|_1 \leq \delta.
        \end{align}
        Stated differently, the state of the message systems is $\delta$-close to one that depends only on the function value, for every choice of input.
    \end{itemize}
   
    \textbf{Messages.} When the messages are quantum, we will refer to this model as $\PSQM$ and when the messages are classical, we refer to the model by $\PSM$.

    \textbf{Entanglement.} When Alice and Bob share entanglement, we denote it by the superscript $*$, obtaining the model $\PSQM^*$ when Alice and Bob send quantum messages and $\PSM^*$ when Alice and Bob send classical messages. 

    \textbf{Cost of a protocol.} The cost of the protocol is defined to be the total number of bits sent by Alice and Bob in the $\PSM$ or $\PSM^*$ models. 
    We denote the minimal cost over all $\epsilon=1/3$ correct, $\delta=1/3$ secure protocols by $\PSM(f)$ or $\PSM^*(f)$. 
    The cost measures $\PSQM(f)$ and $\PSQM^*(f)$ are defined similarly, now counting qubits of communication. 
\end{definition}

The setting of non-local quantum computation (NLQC) is similar to the setting of simultaneous message passing; however, there is no referee and instead, Alice and Bob each send a single simultaneous quantum message to each other and then perform local operations. 
Concretely, the setting is shown in \Cref{fig:non-localandlocal}. 
Non-local quantum computation initially appeared as a cheating strategy in quantum position-verification \cite{kent2011quantum,buhrman2014position}, and subsequently has appeared in relation to a number of other subjects \cite{may2020holographic, apel2024security,allerstorfer2024relating, ananth2024unclonable}. 

\begin{figure*}
    \centering
    \begin{subfigure}{0.45\textwidth}
    \centering
    \begin{tikzpicture}[scale=0.6]
    
    \draw[thick] (-1,-1) -- (-1,1) -- (1,1) -- (1,-1) -- (-1,-1);
    
    \draw[thick,mid arrow] (-3.5,-3) to [out=90,in=-90] (-0.5,-1);
    \draw[thick,mid arrow] (3.5,-3) to [out=90,in=-90] (0.5,-1);
    
    \draw[thick,mid arrow] (0.5,1) to [out=90,in=-90] (3.5,3);
    \draw[thick,mid arrow] (-0.5,1) to [out=90,in=-90] (-3.5,3);
    
    \node at (0,0) {$\mathbfcal{N}$};
    
    \node at (0,-5) {$ $};
    
    \end{tikzpicture}
    \caption{}
    \label{fig:local}
    \end{subfigure}
    \hfill
    \begin{subfigure}{0.45\textwidth}
    \centering
    \begin{tikzpicture}[scale=0.4]
    
    \draw[thick] (-5,-5) -- (-5,-3) -- (-3,-3) -- (-3,-5) -- (-5,-5);
    \node at (-4,-4) {$\mathbfcal{V}^L$};
    
    \draw[thick] (5,-5) -- (5,-3) -- (3,-3) -- (3,-5) -- (5,-5);
    \node at (4,-4) {$\mathbfcal{V}^R$};
    
    \draw[thick] (5,5) -- (5,3) -- (3,3) -- (3,5) -- (5,5);
    \node at (4,4) {$\mathbfcal{W}^R$};
    
    \draw[thick] (-5,5) -- (-5,3) -- (-3,3) -- (-3,5) -- (-5,5);
    \node at (-4,4) {$\mathbfcal{W}^L$};
    
    \draw[thick,mid arrow] (-4.5,-3) -- (-4.5,3);
    
    \draw[thick,mid arrow] (4.5,-3) -- (4.5,3);
    
    \draw[thick,mid arrow] (-3.5,-3) to [out=90,in=-90] (3.5,3);
    
    \draw[thick,mid arrow] (3.5,-3) to [out=90,in=-90] (-3.5,3);
    
    \draw[thick] (-3.5,-6) -- (3.5,-6) -- (0,-8) -- (-3.5,-6);
    \draw[thick] (-3.25,-6) -- (-3.25,-5);
    \draw[thick] (3.25,-6) -- (3.25,-5);
    \node at (0,-7) {$\Psi$};
    
    \draw[thick] (-4.5,-6) -- (-4.5,-5);
    \draw[thick] (4.5,-6) -- (4.5,-5);
    
    \draw[thick] (4.5,5) -- (4.5,6);
    \draw[thick] (-4.5,5) -- (-4.5,6);
    
    \end{tikzpicture}
    \caption{}
    \label{fig:non-localcomputation}
    \end{subfigure}
    \caption{(a) Circuit diagram showing the local implementation of a channel $\mathbfcal{N}$. (b) Circuit diagram showing the form of a non-local quantum computation. $\mathbfcal{V}^L$, $\mathbfcal{V}^R$, $\mathbfcal{W}^L$, and $\mathbfcal{W}^R$ are quantum channels. The goal is to simulate the local channel $\mathbfcal{N}$.}
    \label{fig:non-localandlocal}
\end{figure*}
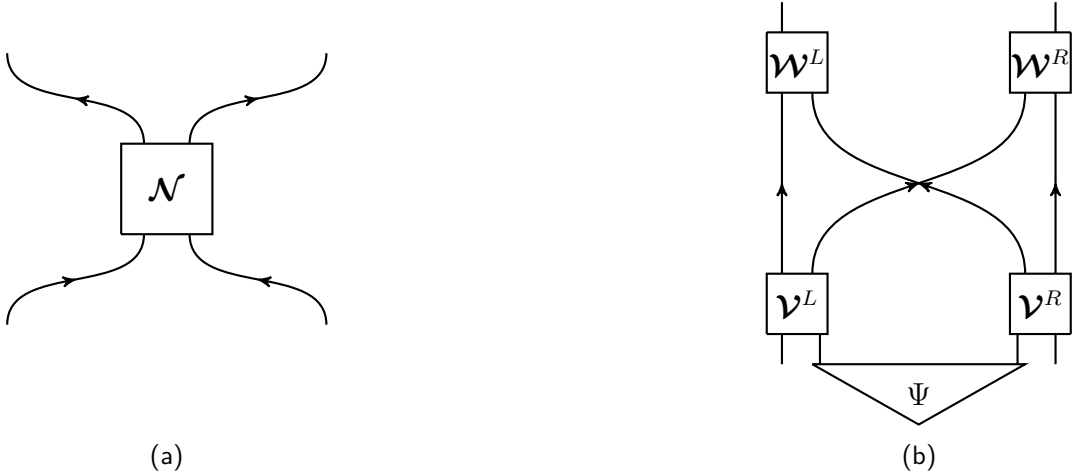

Next, we consider a stronger notion of communication complexity where interactivity is allowed. 
\begin{definition}
Let $f : \{0,1\}^n\times \{0,1\}^n\rightarrow \{0,1\}$ be a function, and $\epsilon \in [0,1]$ be a parameter. 
A \textbf{two-way classical communication} protocol $P$ for $f$ involves two players, Alice and Bob. 
Alice receives $x\in \{0,1\}^n$ as input; Bob receives $y\in\{0,1\}^n$ as input. 
Alice and Bob may additionally share a random string $r$. 
The protocol consists of a sequence of messages passed from Alice to Bob and then Bob to Alice, with Alice eventually outputting a bit $z$. 
The protocol is $\epsilon$ correct if $Pr[f(x,y)=z]\geq 1-\epsilon$ for all $(x,y)$ on which $f$ is defined.
Each message may be computed from the locally held input, the randomness, and any previous messages received by that player.
The cost of a protocol is the number of bits passed between Alice and Bob, maximized over inputs. 
The two way classical communication complexity cost of $f$, $\R_\epsilon(f)$, is defined as the minimal communication cost of any such protocol. 
\end{definition}

\section{Magic lower bounds from communication complexity}

\subsection{Computation and communication models}\label{sec:modeldefinitions}

We start by defining three notions of a Clifford+Magic circuit: the unitary, mixed, and adaptive models.
Throughout this work, we say that a quantum circuit computes a Boolean function $f:\{0,1\}^n\rightarrow \{0,1\}$ with correctness $\epsilon$ if there is a state $\ket{\psi}$ such that running the circuit on $\ket{x}\ket{\psi}$ and measuring the first qubit in the computational basis returns $f(x)$ with probability at least $1-\epsilon$ for all $x$ in the support of $f$. 
The state $\ket{\psi}$ cannot depend on $x$. 

\begin{definition}
    A \textbf{unitary Clifford+Magic} circuit is a quantum circuit composed of Clifford gates along with arbitrary magic gates.
    The cost $\mathcal{M}^{\text{unitary}}_{\epsilon,c_M}(f)$ to compute a Boolean function $f$ in this model is the minimal number of magic gates, each with weight\footnote{The weight of a gate is defined to be the number of qubits on which it acts.} at most $c_M$, appearing in any such circuit that computes $f(x)$ with probability $1-\epsilon$.
    We allow the circuit access to an arbitrary advice state.\footnote{Note that the advice system can both have an arbitrary size, and begin in an arbitrary state.} 
\end{definition}

\begin{definition}
    A \textbf{mixed Clifford+Magic} circuit is a quantum operation $\mathcal{N}$ of the form
    \begin{align}
        \mathcal{N}(\cdot) = \sum_i p_i U_i(\cdot) U_i^\dagger
    \end{align}
    where $\{p_i\}$ is a probability distribution, and $U_i$ is a unitary Clifford+Magic circuit. 
    We consider the magic gate count of a mixed Clifford+Magic circuit to be the worst case magic gate count among the $U_i$. 
    The cost $\mathcal{M}^{\text{mixed}}_{\epsilon,c_M}$ to compute a Boolean function $f$ using a mixed Clifford+Magic circuit is the minimal number of magic gates in any such quantum operation, using gates of weight at most $c_M$, that computes $f$ with probability $1-\epsilon$. 
    We allow access to an arbitrary advice state.
\end{definition}

\begin{definition}
    An \textbf{adaptive Clifford+Magic} circuit is a quantum circuit composed of Clifford gates, arbitrary magic gates, and mid-circuit computational basis measurements.
    Later gate choices may be conditioned on the outcomes of mid-circuit measurements. After a mid-circuit measurement, the choice of the remaining circuit is an arbitrary function of the measurement outcomes so far.
    We consider the cost of an adaptive circuit to be the total number of magic gates plus measurements in the worst-case run of the adaptive circuit. 
    The cost $\mathcal{M}^{\text{adaptive}}_{\epsilon,c_M}$ to compute a Boolean function $f$ using a mixed Clifford+Magic circuit is the minimal cost of any adaptive Clifford+Magic circuit, allowing $c_M$-qubit magic gates and $c_M$-qubit measurements,\footnote{Note that in our convention we count measuring $c_M$ qubits in the computational basis simultaneously in the circuit as a ``single'' measurement. This is somewhat arbitrary, but keeps some constant factors in our eventual lower bound tidy.}, that computes $f$ with probability $1-\epsilon$.
\end{definition}
Note that our adaptive model differs from the one defined in \cite{gosset2025multi}. 
In \cite{gosset2025multi}, the adaptive model allows only ancilla rather than an arbitrary advice state, but then the cost in their model is counted as only the number of magic gates (specifically $T$ gates) rather than the magic gates plus the single qubit measurements.

Comparing these models we see that $\mathcal{M}^{\text{unitary}}(f) \geq \mathcal{M}^{\text{mixed}}(f)$ and $\mathcal{M}^{\text{unitary}}(f) \geq \mathcal{M}^{\text{adaptive}}(f)$, which follows because we could choose to not randomize in the mixed model, or not to measure in the adaptive model.
The adaptive model can simulate the mixed model in the sense that it can prepare $\ket{+}$ states and measure in the computational basis, then control the subsequent circuit off of the measurement outcomes. 
However, the cost $\mathcal{M}^{\text{adaptive}}(f)$ includes the number of such measurements needed as well as the subsequent magic-gate cost while $\mathcal{M}^{\text{mixed}}(f)$ counts only the magic-gate cost, so these quantities may be incomparable. 

\subsection{Lower bounds on magic gate count from communication complexity}\label{sec:magiclowerbounds}

\begin{figure}
\begin{center}
\mbox{\Qcircuit @C=1.5em @R=1em {
\lstick{\ket{x_1}} & \multigate{4}{C^0} & \qw & \qw & \multigate{4}{C^1} & \qw & \dots & & \qw & \qw & \multigate{4}{C^k} & \qw \\
\lstick{\ket{x_1}} & \ghost{C^0} & \qw & \qw & \ghost{C^1} \qw & \qw & \dots & & \qw & \qw &\ghost{C^k} & \qw \\
\lstick{\ket{y_1}} & \ghost{C^0} & \qw & \qw & \ghost{C^1} \qw & \qw & \dots & & \qw & \qw &\ghost{C^k} & \qw \\
\lstick{\ket{y_2}} & \ghost{C^0} & \qw & \qw & \ghost{C^1} \qw & \qw & \dots  & & \qw & \qw  &\ghost{C^k} & \qw \\
\lstick{\ket{\psi}} & \ghost{C^0} & \qw & \gate{M_1} & \ghost{C^1} \qw & \qw & \dots & & \qw & \gate{M_{k}} & \ghost{C^k} & \qw & \meter
}}
\end{center}
\caption{Quantum circuit with $k$ magic gates that computes $f(x,y)$.}\label{fig:layeredcircuit}
\end{figure}
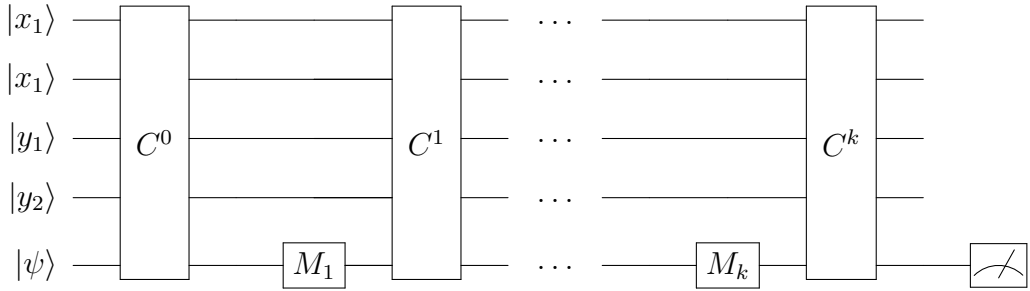

\begin{figure}
\begin{center}
\mbox{
\Qcircuit @C=1.5em @R=1em {
\lstick{\ket{0}} & \multigate{4}{C^0} & \qw & \qw & \multigate{4}{C^1} & \qw & \dots & & \qw & \qw & \multigate{4}{C^k} & \qw \\
\lstick{\ket{0}} & \ghost{C^0} & \qw & \qw & \ghost{C^1} \qw & \qw & \dots & & \qw & \qw &\ghost{C^k} & \qw \\
\lstick{\ket{0}} & \ghost{C^0} & \qw & \qw & \ghost{C^1} \qw & \qw & \dots & & \qw & \qw &\ghost{C^k} & \qw \\
\lstick{\ket{0}} & \ghost{C^0} & \qw & \qw & \ghost{C^1} \qw & \qw & \dots  & & \qw & \qw  &\ghost{C^k} & \qw \\
\lstick{\ket{\psi}} & \ghost{C^0} & \gate{\sigma_1} & \gate{M_1} & \ghost{C^1} \qw & \qw & \dots & & \gate{\sigma_k} & \gate{M_{k}} & \ghost{C^k} & \gate{\sigma_{k+1}} & \meter
}}
\end{center}
\caption{The circuit simulated by the referee in our $\Dsim$ protocol. The construction begins with a circuit (\Cref{fig:layeredcircuit}) that computes $f(x,y)$ from inputs $(x,y)$ along with an advice state. Here, we run the circuit on the all-zeroes input, and Pauli corrections $\sigma_i$ are made just before each magic gate $M_i$ as necessary. One additional Pauli correction is made before the measurement. Each Pauli correction can be computed in the $\Dsim$ model with constant communication, so the total communication cost is a constant times the number of magic gates.
}\label{fig:R||protocol}
\end{figure}
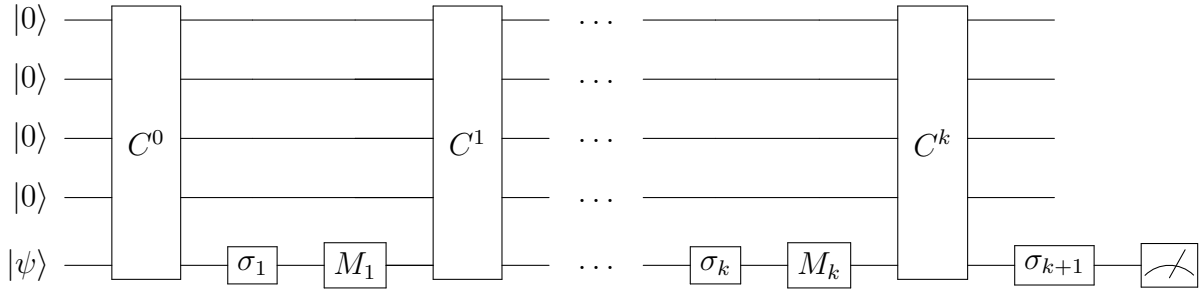

In this section we give our lower bound on the magic gate count from the simultaneous message passing and parity decision tree complexities.
Our first result is the following.
\begin{theorem}\label{thm:unitarymagiclowerbound}
    Let $f$ be a Boolean function. Then the $\Dsim$ and unitary magic gate cost are related as follows. For all $\epsilon < \frac{1}{2}$, 
    \begin{align}
        \frac{1}{4c_M} (\Dsim(f)-2) \leq \mathcal{M}^{\text{unitary}}_{\epsilon}(f)~.
    \end{align}
\end{theorem}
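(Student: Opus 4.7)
The plan is to construct a $\Dsim$ protocol of cost $4c_M k + 2$ starting from any unitary Clifford+Magic circuit $C$ with $k$ magic gates of weight at most $c_M$ that computes $f$ with error $\epsilon < 1/2$. I would decompose $C$ into $C = C^k M_k C^{k-1} \cdots M_1 C^0$ as in \Cref{fig:layeredcircuit}, with each $C^i$ Clifford and each $M_i$ a magic gate. The referee's strategy is then the one in \Cref{fig:R||protocol}: instead of running $C$ on $\ket{x,y,\psi}$, it runs on $\ket{0,0,\psi}$ and inserts Pauli corrections $\sigma_i$ just before each $M_i$, together with one final correction $\sigma_{k+1}$ before the output measurement, chosen so as to emulate the action of $C$ on the true input.

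The next step is to determine what information Alice and Bob must supply so that the referee can apply these corrections. Starting from $\ket{x,y,\psi} = (X^x \otimes X^y \otimes I)\ket{0,0,\psi}$ and iteratively conjugating the accumulated Pauli through the Cliffords, each $\sigma_i$ is a Pauli on all qubits whose description is linear in the bits of $(x,y)$. The key observation is that the referee only needs to apply $\sigma_i$ on the at-most-$c_M$ qubits that $M_i$ acts on: the portion of $\sigma_i$ supported on other qubits commutes with $M_i$ (Paulis on disjoint wires commute with any gate, Clifford or magic) and can be absorbed into $\sigma_{i+1}$ after further Clifford conjugation. Specifying a Pauli on $c_M$ qubits takes $2c_M$ bits ($X$- and $Z$-components per qubit), while for the final correction $\sigma_{k+1}$ only the $X$-component on the measured qubit can change the output, contributing $1$ bit. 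Because each such bit is of the form $a(x) \oplus b(y)$ for explicit parities $a, b$, Alice transmits $a(x)$ and Bob transmits $b(y)$, contributing two bits of communication per parity, for a total of $4c_M k + 2$.

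Finally, I would promote this to a \emph{zero-error} protocol. Once Alice's and Bob's messages are fixed, the post-correction measurement distribution on the output qubit is a fixed function of $(x,y)$ that exactly reproduces the statistics of the original circuit; because $\epsilon < 1/2$, the correct value $f(x,y)$ strictly dominates in probability. Since $\Dsim$ places no restriction on local computation, the referee can simulate the entire circuit with $\ket{\psi}$ and the stated corrections offline and output the more likely measurement outcome, a deterministic function of the messages that equals $f(x,y)$. This gives $\Dsim(f) \le 4c_M\, \mathcal{M}^{\text{unitary}}_{\epsilon,c_M}(f) + 2$, and rearranging yields the theorem. The only real subtlety is the deferral of Pauli pieces on disjoint wires past the non-Clifford $M_i$: this is precisely where a non-Clifford gate would normally cause trouble, and the whole strategy hinges on the trivial but essential fact that disjoint-support Paulis commute with any gate, so that non-Cliffordness never obstructs the argument.
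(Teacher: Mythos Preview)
Your proposal is correct and follows essentially the same approach as the paper: run the circuit on the all-zeros input, propagate the input-dependent Pauli through each Clifford layer, correct only on the (at most $c_M$) wires touched by each magic gate at a cost of $2c_M$ parities each, add one final parity for the $X$-correction on the output qubit, and determinize by having the referee compute the output distribution and return the majority outcome. The paper phrases the determinization as ``repeating the simulation'' to estimate the distribution, which is the same idea as your offline computation of the more likely outcome.
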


\begin{proof}
The proof was essentially stated in the introduction. 
We repeat the proof here, filling in some details. 

Consider a Clifford+Magic circuit computing $f$ with any probability $p=1-\epsilon>1/2$. 
We decompose the circuit into layers, consisting of either a Clifford circuit or a magic gate.
See \Cref{fig:layeredcircuit}. 
Let the input to the circuit be $\ket{z}\ket{\psi}$, with $z$ the input to $f$ and $\ket{\psi}$ the advice state. 

We consider any division of the input string $z$ into $(x,y)$, and give a $\Dsim$ protocol for computing $f$ with respect to this division. 
The referee runs the circuit on the input $\ket{0}\ket{0}\ket{\psi}$, which we view as $X_A^{\vec{x}}\ket{x}_A X_B^{\vec{y}}\ket{y}_B\ket{\psi}_E$. 
Then after the first Clifford layer is applied, we have
\begin{align}
    C^1_{ABE}X^{\vec{x}}_A\ket{x}_A X^{\vec{y}}_B\ket{y}_B\ket{\psi}_E = \sigma_{ABE}[x,y]C^1\ket{x}_A\ket{y}_B\ket{\psi}_E.
\end{align}
$\sigma_{ABE}[x,y]$ denotes a string of Pauli corrections, which depend on the inputs $(x,y)$. 
At this point we would like to apply the first magic gate. 
Before doing so, we compute the Pauli corrections that act on the same qubits as the magic gate. 
These are determined by at most $2c_M$ parity functions of $(x,y)$: this is because the magic gate acts on at most $c_M$ qubits, and each qubit can have a Pauli correction of the form $X^aZ^b$.\footnote{We can ignore global phases.}
The values of $a$ and $b$ are determined by a parity function of $(x,y)$, with the choice of parity function dependent on the circuit $C^0$. 
For any parity function $p(x,y)=\sum_{i\in S_A}x_i+ \sum_{i\in S_B}y_i$, Alice can send the single bit $\sum_{i\in S_A}x_i$ and Bob the single bit $\sum_{i\in S_B}y_i$, allowing the referee to compute $p(x,y)$, so the communication cost is $2$.

After correcting these Pauli corrections we apply the relevant magic gate, then the next Clifford layer. 
The remaining Pauli corrections conjugate through the second Clifford layer to give further Pauli corrections before the second magic gate. 
We again have Alice and Bob send messages to allow the referee to compute the $2c_M$ parities that determine the needed corrections, then proceed as before. 

This process continues until reaching the end of the circuit. 
Finally, we compute one more parity function to determine if there is a Pauli $X$ correction before the final measurement. 
Correcting this if needed and then measuring, we obtain a sample of the output distribution of the circuit. 
This procedure is illustrated in \Cref{fig:R||protocol}. 

This simulation can be repeated (using the same parity values each time) so that we can determine the output distribution of the final measurement. 
If the outcome is $0$ with probability more than $1/2$ the referee outputs $0$, otherwise we output $1$. 
If the Clifford+Magic circuit is correct with probability $1-\epsilon>1/2$, this yields the correct output. 
The total cost is $4c_M$ times the total number of magic gates, plus 2 for the final measurement, so that $\Dsim(f)\leq 4c_M \cdot \mathcal{M}_{\epsilon<1/2}^{\text{unitary}}(f)+2$. 
Rearranging this gives the claimed lower bound. 
\end{proof}

\begin{remark}\label{remark:postselectunitary}
    We can actually strengthen the computational model lower bounded by $\Dsim$: suppose the circuit model is allowed to post-select onto fixed quantum states. To simulate this in the $\Dsim$ model, we send the Pauli corrections occurring just before the post-selection. This adds 2 bits of communication cost for each qubit of post-selection. Thus $\Dsim(f)$ also lower bounds the number of magic gates plus the number of qubits of post-selection in a Clifford+Magic + post-selection circuit that computes $f$. We can similarly allow for post-selection in the lower bounds below on the mixed Clifford+Magic model, and in the adaptive model. 
\end{remark}

Next, we build on the proof technique used above to bound the mixed Clifford+Magic computational model in terms of a randomized $\Dsim$ communication model. 
\begin{theorem}\label{thm:mixedlowerbound}
    Let $f$ be a Boolean function. Then the $\Rsim^{\pub}$ and mixed unitary magic gate cost are related by 
    \begin{align}
        \frac{1}{4c_M} (\Rsim^{\pub}_{\epsilon}(f)-2) \leq \mathcal{M}^{\text{mixed}}_{\epsilon}(f). 
    \end{align}
\end{theorem}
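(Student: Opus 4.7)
The plan is to reduce the mixed case to the unitary case established in \Cref{thm:unitarymagiclowerbound} by exploiting the public randomness available in the $\Rsim^{\pub}_\epsilon$ model. Write a mixed Clifford+Magic circuit computing $f$ with error $\epsilon$ as $\mathcal{N}(\cdot) = \sum_i p_i U_i(\cdot) U_i^\dagger$, where each $U_i$ is a unitary Clifford+Magic circuit using at most $\mathcal{M}^{\text{mixed}}_\epsilon(f)$ magic gates of weight at most $c_M$ and where an arbitrary advice state $\ket{\psi}$ is supplied on the workspace register.

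The $\Rsim^{\pub}_\epsilon$ protocol proceeds as follows. First, Alice, Bob, and the referee use the shared public random string to jointly sample an index $i$ with probability $p_i$; all three parties hold the same $i$ at the end of this step. Once $i$ is fixed, they execute the simulation from the proof of \Cref{thm:unitarymagiclowerbound} applied to the particular unitary $U_i$: the referee starts from $\ket{0}_A\ket{0}_B\ket{\psi}_E$ and simulates $U_i$ layer by layer, and just before each magic gate of $U_i$, Alice sends her side of each of the $2c_M$ parity functions of $(x,y)$ that determine the Pauli correction on the gate's support, while Bob sends his side; one additional parity is exchanged before the final measurement to determine the $X$ correction on the output qubit. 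The referee applies the resulting Pauli corrections, performs the final measurement once, and outputs that single bit.

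For correctness, note that by the analysis in \Cref{thm:unitarymagiclowerbound}, conditioned on any fixed $i$ the referee's output is distributed exactly as the first-qubit measurement of $U_i\ket{x}_A\ket{y}_B\ket{\psi}_E$. Averaging over $i\sim p_i$ gives exactly the output distribution of $\mathcal{N}$ applied to the correct input, which by assumption equals $f(x,y)$ with probability at least $1-\epsilon$. For the cost, each $U_i$ uses at most $\mathcal{M}^{\text{mixed}}_\epsilon(f)$ magic gates, yielding $4c_M \cdot \mathcal{M}^{\text{mixed}}_\epsilon(f) + 2$ bits of communication in the worst case, so $\Rsim^{\pub}_\epsilon(f) \le 4c_M \cdot \mathcal{M}^{\text{mixed}}_\epsilon(f) + 2$, which rearranges to the claim.

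The only conceptual point that needs attention — and thus the closest thing to an obstacle — is the difference in how the simulation is consumed. In the unitary proof the referee must repeat the simulation many times to deduce the measurement distribution exactly, since $\Dsim$ tolerates zero error; here the referee instead outputs just one sample, which is legitimate precisely because $\Rsim^{\pub}_\epsilon$ and $\mathcal{N}$ share the same error budget $\epsilon$. A small secondary check is that each party can in fact compute the appropriate parity value once $i$ has been drawn: the shape of each parity function is a fixed feature of $U_i$ and $i$ is common knowledge, so Alice's bit depends only on $x$ and $i$, and Bob's only on $y$ and $i$, exactly as required by the simultaneous-message model.
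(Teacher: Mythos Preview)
Your proof is correct and matches the paper's argument essentially step for step: sample $i$ from the public randomness, run the parity-based simulation of $U_i$ from \Cref{thm:unitarymagiclowerbound}, have the referee output a single sample, and bound the cost by the worst-case magic count over the $U_i$. You even flag the one subtlety the paper highlights, namely that the referee samples once rather than repeating because the $\Rsim^{\pub}_\epsilon$ model shares the same error parameter $\epsilon$ as the mixed circuit.
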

\begin{proof}
    Consider a mixed Clifford+Magic circuit defined by probabilities $\{p_i\}$ and unitaries $\{U_i\}$. 
    Let the probability with which circuit $U_i$ outputs $f(z)$ given input $z$ be $P_i(z)$. 
    Then the success probability for the mixed circuit is
    \begin{align}
        p_{suc}(z) =\sum_i p_i P_i(z).
    \end{align}
    By assumption, this is larger than $1-\epsilon$. 
    
    The communication protocol is as follows. 
    Alice, Bob and the referee use the public randomness to draw from the set $\{U_i\}$ according to the probabilities $\{p_i\}$. 
    When they draw $U_i$, they run the $\Dsim$ protocol defined in \Cref{thm:unitarymagiclowerbound} so that Alice and Bob send the parities needed for the referee to simulate the Clifford+Magic circuit $U_i$. 
    Now however, the referee just samples from this circuit once and returns the output. 
    This will be correct with probability $P_i$, so the overall success probability of the communication protocol is just $\sum_i p_iP_i$ as before. 
    This is larger than $1-\epsilon$ as needed. 

    Finally, note that the bits sent in this randomized protocol is the number of bits sent in the protocol for the selected $U_i$, which is $4c_M$ times the number of magic gates in $U_i$, plus 2. 
    The worst case communication cost is then set by the number of magic gates maximized over the $U_i$, which corresponds to our definition of $\mathcal{M}^{\text{mixed}}_\epsilon(f)$. 
    Thus $\Rsim^{\pub}_{\epsilon}(f) \leq 4c_M \mathcal{M}^{\text{mixed}}_{\epsilon,c_M}(f) +2$, which gives the claimed lower bound.
\end{proof}

Finally we consider the adaptive Clifford+Magic model. 
Recall that we defined the adaptive model to allow mid-circuit measurements and an arbitrary advice state, and the cost to be the number of magic gates plus the number of single qubit measurements. 
We will show this cost is lower bounded by the two-way communication complexity. 

\begin{theorem}\label{thm:adaptivelowerbound}
    Let $f$ be a Boolean function. 
    Then the two-way communication complexity $\R_\epsilon(f)$ and the cost $\mathcal{M}_{\epsilon,c_M}^{\text{adaptive}}(f)$ are related by
    \begin{align}
        \frac{1}{2c_M}\left(\R_\epsilon(f)-1\right) \leq \mathcal{M}_{\epsilon,c_M}^{\text{adaptive}}(f)
    \end{align}
\end{theorem}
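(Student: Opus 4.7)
The plan is to adapt the simulation strategy from \Cref{thm:unitarymagiclowerbound} to the adaptive model, using the interactivity of the two-way communication model to handle the branching induced by mid-circuit measurements. The essential idea is that Alice simulates the adaptive Clifford+Magic circuit on the all-zeroes input while symbolically tracking Pauli corrections, and Bob sends her the parities needed to correct Paulis just before each magic gate and each mid-circuit measurement. The new wrinkle, absent in the purely unitary case, is that the Pauli corrections and the circuit structure for operations downstream of a measurement depend on the true outcome of that measurement, so Alice must communicate each true outcome back to Bob so that he can follow the correct branch of the adaptive circuit and compute the next round of parities.

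Concretely, Alice simulates the circuit on $\ket{0}_A\ket{0}_B\ket{\psi}_E$, viewing this as $X^{\vec{x}}_A \ket{x}_A X^{\vec{y}}_B \ket{y}_B \ket{\psi}_E$, and conjugates the input-dependent Pauli string through each Clifford layer. Just before each magic gate on at most $c_M$ qubits, the accumulated Pauli correction on those qubits is determined by $2c_M$ parities of $(x,y)$ (one per possible $X$ and one per possible $Z$ on each qubit). Bob computes his contribution to each of these $2c_M$ parities and sends them to Alice; Alice XORs with her own contributions, applies the correction, and then applies the magic gate. At each mid-circuit measurement of at most $c_M$ qubits, only the $X$ component of the correction affects the outcome, so Bob sends $c_M$ parity bits. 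Alice performs the measurement, XORs the raw outcomes with the full $X$-correction parities to obtain the true outcomes $m$, and sends $m$ back to Bob as $c_M$ bits so that he can track the correct branch and compute subsequent parities. Thus each magic gate and each mid-circuit measurement incurs $2c_M$ bits of two-way communication.

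For the final single-qubit output measurement, only the $X$ correction matters: Bob sends a single parity bit, Alice measures, XORs with that bit, and outputs. Because the entire simulation faithfully reproduces the output distribution of the original adaptive circuit run on $\ket{x}\ket{y}\ket{\psi}$, and the original circuit is $(1-\epsilon)$-correct, Alice's output agrees with $f(x,y)$ with probability at least $1-\epsilon$. Summing the communication costs yields $\R_\epsilon(f) \leq 2c_M \cdot \mathcal{M}^{\text{adaptive}}_{\epsilon,c_M}(f) + 1$, which rearranges to the claimed bound.

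The main subtlety to handle carefully is the synchronization between Alice and Bob across the adaptive branches: since Bob's parity sets depend on which Clifford layers have been applied, and those layers depend on the outcomes of earlier mid-circuit measurements, Alice must relay each true outcome to Bob before the next magic gate or measurement is processed. This round-trip is exactly what the two-way model affords (and what the $\Dsim$ model does not), and it is the only feature that distinguishes this argument from a direct application of the proof of \Cref{thm:unitarymagiclowerbound}.
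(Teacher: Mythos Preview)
Your proposal is correct and follows essentially the same strategy as the paper's proof: Alice locally runs the adaptive circuit, tracks Pauli corrections through Clifford layers, and uses $2c_M$ bits of communication per magic gate or mid-circuit measurement (with one final bit for the output $X$ correction), yielding $\R_\epsilon(f)\le 2c_M\,\mathcal{M}^{\text{adaptive}}_{\epsilon,c_M}(f)+1$. The only cosmetic difference is that the paper has Alice initialize the $A$ register to her actual input $\ket{x}$ (so the Pauli corrections depend only on $y$ and Bob can send them outright), whereas you initialize to $\ket{0}_A\ket{0}_B$ and have Bob send only his share of each parity; both variants give identical communication counts.
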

\begin{proof}
    We use a similar strategy as in the last two theorems to build a communication protocol from the adaptive circuit. 
    Consider a decomposition of the adaptive circuit into layers. 
    Each layer may include arbitrary Clifford gates but only one magic gate or measurement, which occurs as the first gate in the layer. 
    Alice prepares the advice state $\ket{\psi}_E$ and runs the circuit on the input $\ket{x}_A\ket{0}_B$ input, which we view as $\ket{x}_A X_B^{\vec{y}}\ket{y}_B$. 
    Alice runs the first Clifford layer, giving
    \begin{align}
        C^1_{ABE}X_B^{\vec{y}}\ket{x}\ket{y}\ket{\psi}_E = \sigma_{ABE}[y]C^1_{ABE}\ket{x}_A\ket{y}_B\ket{\psi}_E
    \end{align}
    Suppose the first non-Clifford operation is a magic gate. 
    Then, Bob computes the identities of the Pauli corrections acting on the wires that magic gates acts on and sends this to Alice. 
    This costs $2c_M$ bits of communication, since recall each magic gates acts on at most $c_M$ qubits, and for each qubit we must communicate whether there is an $X$ correction and a $Z$ correction. 
    Alternatively, suppose the first non-Clifford operation is a measurement, which again may act on $c_M$ qubits.
    According to our model we assume the measurement is in the computational basis. 
    Then Bob sends the Pauli $X$ corrections acting on the measured wires, Alice performs the appropriate corrections before making the measurement, and Alice then sends back to Bob the $c_M$ bits of measurement outcome. 
    The total communication cost of the measurement is $2c_M$, as with the magic gate. 
    Alice and Bob then both determine the next layer of the circuit based on the measurement outcomes. 

    This procedure repeats for every layer of the circuit, giving a total cost of $2c_M$ multiplied by the number of magic gates or mid-circuit measurements. 
    The final measurement that determines $f(x,y)$ requires an additional Pauli $X$ correction, contributing $+1$ to the communication cost.
    The measurement outcome determines $f(x,y)$ with probability $1-\epsilon$.
    Overall then we have that
    \begin{align}
        \R_\epsilon(f) \leq 2c_M \mathcal{M}^{\text{adaptive}}_{\epsilon,c_M}(f) + 1
    \end{align}
    which gives the claimed lower bound on $\mathcal{M}^{\text{adaptive}}_{\epsilon,c_M}(f)$.
\end{proof}

\subsection*{Bounds from parity decision trees}

Finally, our lower bounds on the unitary and mixed models can be strengthened to be in terms of a classical computational model known as a parity decision tree (PDT). 
Lower bounds from parity decision tree's were proven independently in \cite{gosset2025multi}. 
We point out here that the lower bounds from PDT complexity, which can be stronger than the bounds in terms of communication complexity, can also be recovered using our proof technique.\footnote{Note that we were led to consider if our technique gave lower bounds from PDT's after discussing these results with the authors of \cite{gosset2025multi}.} 

We define deterministic and randomized variants of parity decision tree's, before discussing the lower bounds. 
\begin{definition}
    Consider a Boolean function $f:\{0,1\}^n\rightarrow \{0,1\}$. 
    A \textbf{non-adaptive parity decision tree} ($\PDT^{\na}$) of depth $k$ computing $f$ is a function $g:\{0,1\}^k\rightarrow \{0,1\}$ such that $f(x)=g(p_1,...,p_k)$, where each $p_i$ is a parity function. 
    The non-adaptive parity decision tree complexity of $f$ is the minimal $k$ such that there is PDT of depth $k$ that computes $f$. 
\end{definition}

\begin{definition}
    Consider a Boolean function $f:\{0,1\}^n\rightarrow \{0,1\}$.
    A \textbf{randomized non-adaptive parity decision tree} ($\RPDT^{\na}$) of depth $k$ computing $f$ is a probability distribution over a set of non-adaptive parity decision tree's all of depth at most $k$. 
    We say the RPDT computes $f$ with probability $1-\epsilon$ if for every choice of input $x$, the RPDT outputs $f(x)$ with probability at least $1-\epsilon$. 
\end{definition}

Regarding the unitary model, from the proof of \Cref{thm:unitarymagiclowerbound}, we can observe that 
\begin{align}
    \frac{1}{2c_M}(\PDT^{\na}(f) - 1) \leq \mathcal{M}^{\text{unitary}}_{\epsilon<1/2}(f).
\end{align}
This follows because we can observe in the proof of~\Cref{thm:unitarymagiclowerbound} that the communication from Alice and Bob purely consists of parities of $x,y$ (recall that these were needed to do the required Pauli corrections), furthermore, these parity functions depend only on the circuit and not on the input.

Specializing to the case of $T$ gates, we have $c_M=1$ and can observe that since $T$ commutes with $Z$, we can actually leave the $Z$ Paulis uncorrected. 
This halves the number of parities, and gives the bound
\begin{align}
    \PDT^{\na}(f) - 1 \leq \mathcal{T}^{\text{unitary}}_{\epsilon<1/2}(f).
\end{align}
This is exactly one of the lower bounds proven by another technique in \cite{gosset2025multi}.

Regarding the mixed model, from the proof of \Cref{thm:mixedlowerbound} we can deduce that
\begin{align}
    \frac{1}{2c_M}(\RPDT^{\na}_\epsilon(f)-1) \leq \mathcal{M}^{\text{mixed}}_\epsilon(f).
\end{align}
To understand why, we first view the $\Dsim$ protocol associated with a single $U_i$ as defining a \emph{randomized} parity decision tree (note that in \Cref{thm:unitarymagiclowerbound} we constructed a deterministic PDT). 
Specifically, we consider the same set of parities as before, which allow the circuit to be simulated by correcting Paulis as needed. 
Now however, we have the decision tree sample an output from the distribution defined by the final measurement. 
Thus each $U_i$ is associated to a $\RPDT^{\na}$ which outputs $f(x,y)$ with the same probability as running $U_i$ and measuring the output qubit. 
Now we add an additional randomization step, where we sample $U_i$ with probability $p_i$.
This defines a new $\RPDT^{\na}$, which now outputs $f(x,y)$ with the same probability as the mixed circuit $\sum_i p_i U_i (\cdot) U_i^\dagger$.
Since the number of parities needed to simulate the worst case $U_i$ is $2c_M\cdot \mathcal{M}^{\text{mixed}}_\epsilon(f)+1$, we obtain the above bound. 
Again we can specialize this to the $T$ gate case and obtain
\begin{align}
    \RPDT^{\na}_\epsilon(f)-1\leq \mathcal{T}^{\text{mixed}}_\epsilon(f).
\end{align}
This matches a result in \cite{gosset2025multi}. 

\subsection{Lower bounds for concrete unitary operators}

To illustrate our lower bound technique, we prove tight magic-count bounds on implementing some unitary operations. The approach is the following: to prove a magic-count lower bound on implementing a unitary $U$, we show that the unitary $U$ can be used to efficiently compute a Boolean function $f$ with little-to-no magic overhead. Then, using known lower bounds on the communication complexity of $f$ along with our lower bounds (in particular \Cref{thm:unitarymagiclowerbound,thm:mixedlowerbound}), we obtain magic-count lower bounds for $U$. 

Although the magic-count measures $\mathcal{M}^{\text{unitary}}$ and $\mathcal{M}^{\text{mixed}}$ were defined only for Boolean functions, they have natural extensions to general unitary operators. 
$\mathcal{M}^{\text{unitary}}$ and $\mathcal{M}^{\text{mixed}}$ are defined with respect to the same set of allowed operations as before, but now the requirement is that a target unitary be implemented to within $\epsilon$ distance in diamond norm. 
Notice that if a circuit using unitary $U$ computes $f$ with probability 1, a circuit with $U$ replaced with $U'$ satisfying $\Vert U-U'\Vert_\diamond \leq \epsilon$ will compute $f$ with probability at least $1-\epsilon$. 
This means in particular that if $U$ computes $f$ (with no magic overhead) exactly, then $\mathcal{M}^{\text{unitary}}_{\epsilon}(U) \geq \mathcal{M}^{\text{unitary}}_\epsilon(f)$ and $\mathcal{M}^{\text{mixed}}_{\epsilon}(U) \geq \mathcal{M}^{\text{mixed}}_\epsilon(f)$. 


\paragraph{Generalized Toffoli gates.} An $n$-qubit generalized Toffoli gate computes the following:
\[
    \mathrm{Toffoli}_n : \ket{x_1,\ldots,x_n,b} \mapsto \ket{x_1,\ldots,x_n,b \oplus \bigwedge_{i=1}^n x_i}~.
\]
In other words, it XORs the AND of the first $n$ bits into the target qubit. 

We construct a quantum circuit that uses the generalized Toffoli gate and computes the equality function. 
Consider the circuit $C$ that acts on $2n+1$ qubits (labeled $\mathsf{A}_1,\ldots,\mathsf{A}_n,\mathsf{B}_1,\ldots,\mathsf{B}_n,\mathsf{C}$, and assuming the input is of the form $\ket{x,y,0}$ where $x,y \in \{0,1\}^n$, computes:
\begin{enumerate}
    \item For each $i$, apply CNOT with control on qubit $\mathsf{A}_i$ and target on qubit $\mathsf{B}_i$ to obtain $\ket{x_i,x_i \oplus y_i}$. Apply X on $\mathsf{B}_i$ to obtain $\ket{x_i \oplus y_i \oplus 1}$. 
    \item Apply $\mathrm{Toffoli}_n$ controlled on qubits $\mathsf{B}_1,\ldots,\mathsf{B}_n$ and with target qubit $\mathsf{C}$. 
\end{enumerate}
The circuit computes the equality function. Thus we can obtain magic-count lower bounds for $\mathrm{Toffoli}_n$ via communication complexity lower bounds for equality.

\begin{lemma}
    We have that
        \[\mathcal{M}^{\text{mixed}}_{\epsilon}(\mathrm{Toffoli}_n) = \Omega( \min \{ \log 1/\epsilon, n \} )
    \]
    and for all $\epsilon < \frac{1}{2}$,
    \[
    \mathcal{M}^{\text{unitary}}_{\epsilon}(\mathrm{Toffoli}_n) = \Omega(n)~.
    \]
\end{lemma}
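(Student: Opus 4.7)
My plan is to instantiate the recipe described at the start of \Cref{sec:magiclowerbounds}: combine the circuit $C$ displayed just before the lemma with our magic-count lower bounds and the classical communication complexity of equality. First I would verify that $C$, built from $n$ CNOTs, $n$ single-qubit $X$ gates, and one call to $\mathrm{Toffoli}_n$, exactly computes $\mathrm{EQ}_n(x,y):=[x=y]$ on input $\ket{x,y,0}$: after step~1 qubit $\mathsf{B}_i$ holds $x_i\oplus y_i\oplus 1 = [x_i=y_i]$, so the multi-controlled Toffoli in step~2 flips $\mathsf{C}$ precisely when all of $\mathsf{B}_1,\dots,\mathsf{B}_n$ equal $1$, i.e.\ when $x=y$. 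Viewing $(x,y)$ as split between Alice (owning the $\mathsf{A}$ register) and Bob (owning the $\mathsf{B}$ register), $C$ is a Clifford circuit plus exactly one invocation of $\mathrm{Toffoli}_n$, and so carries no magic overhead beyond the Toffoli itself.

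Next I would apply the diamond-norm reduction recorded just above the lemma: if a unitary (respectively mixed) Clifford+Magic implementation of $\mathrm{Toffoli}_n$ that is $\epsilon$-close in diamond distance uses $M$ magic gates of bounded weight, then substituting it into $C$ yields a Clifford+Magic circuit computing $\mathrm{EQ}_n$ with error at most $\epsilon$ using the same $M$ magic gates. Hence $\mathcal{M}^{\text{unitary}}_\epsilon(\mathrm{Toffoli}_n)\ge \mathcal{M}^{\text{unitary}}_\epsilon(\mathrm{EQ}_n)$ and $\mathcal{M}^{\text{mixed}}_\epsilon(\mathrm{Toffoli}_n)\ge \mathcal{M}^{\text{mixed}}_\epsilon(\mathrm{EQ}_n)$. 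Plugging these into \Cref{thm:unitarymagiclowerbound} and \Cref{thm:mixedlowerbound} (absorbing the constant factor $1/c_M$ into the $\Omega$, which is legitimate because we may take $c_M = O(1)$) reduces the lemma to two classical communication lower bounds for equality.

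Finally I would invoke the folklore lower bounds for $\mathrm{EQ}_n$: a pigeonhole/collision argument shows $\Dsim(\mathrm{EQ}_n)=\Omega(n)$, since if Alice's message has fewer than $n$ bits, two inputs $x\neq x'$ collide and the referee cannot distinguish $(x,x)$ from $(x',x)$, which disagree on $\mathrm{EQ}_n$; and a standard hashing/distinguishing argument yields $\Rsim^{\pub}_\epsilon(\mathrm{EQ}_n) = \Omega(\min\{\log(1/\epsilon),\,n\})$, where the $\log(1/\epsilon)$ part comes from a bound on the collision probability of the shared hash and the $n$ bound dominates once $\log(1/\epsilon)>n$. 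Substituting the first into \Cref{thm:unitarymagiclowerbound} gives $\mathcal{M}^{\text{unitary}}_\epsilon(\mathrm{Toffoli}_n)=\Omega(n)$ for every $\epsilon<1/2$, and substituting the second into \Cref{thm:mixedlowerbound} gives $\mathcal{M}^{\text{mixed}}_\epsilon(\mathrm{Toffoli}_n)=\Omega(\min\{\log(1/\epsilon),\,n\})$. The only real subtlety is verifying that the public-coin SMP lower bound for equality is genuinely $\Omega(\log(1/\epsilon))$ rather than $\Omega(\log\log(1/\epsilon))$, but this is a well-known calculation; the other steps are purely bookkeeping.
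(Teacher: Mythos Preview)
Your proposal is correct and follows essentially the same route as the paper: reduce to $\mathrm{EQ}_n$ via the Clifford-plus-one-Toffoli circuit $C$, apply \Cref{thm:unitarymagiclowerbound} and \Cref{thm:mixedlowerbound}, and then invoke the standard $\Dsim(\mathrm{EQ}_n)=\Omega(n)$ and $\Rsim^{\pub}_\epsilon(\mathrm{EQ}_n)=\Omega(\min\{\log(1/\epsilon),n\})$ bounds. The paper simply cites these communication bounds rather than sketching them, but the structure is identical.
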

\begin{proof}
First we prove the mixed Clifford+Magic lower bound. Since the generalized Toffoli gate can be used to compute the equality function with no magic overhead, we have $\mathcal{M}^{\text{mixed}}_{\epsilon}(\mathrm{Toffoli}_n) \geq \mathcal{M}^{\text{mixed}}_{\epsilon}(\mathrm{Equal}_n)$. By \Cref{thm:mixedlowerbound}, 
\[
    \mathcal{M}^{\text{mixed}}_{\epsilon}(\mathrm{Toffoli}_n) \geq \mathcal{M}^{\text{mixed}}_{\epsilon}(\mathrm{Equal}_n) \geq \frac{1}{4c_M} \Big (\Rsim^{\pub}_{\epsilon}(\mathrm{Equal}_n)-1 \Big)~.
\]
Now, the $\Rsim^{\pub}_\epsilon$ complexity of $\mathrm{Equal}_n$ is well known to be $\Omega( \min \{ \log 1/\epsilon, n \} )$~\cite{dale}. 

Next we prove the unitary Clifford+Magic lower bound. By \Cref{thm:unitarymagiclowerbound}, we have that
\[
    \mathcal{M}^{\text{unitary}}_{\epsilon}(\mathrm{Toffoli}_n) \geq \mathcal{M}^{\text{unitary}}_{\epsilon}(\mathrm{Equal}_n) \geq \frac{1}{4c_M} \Big( \Dsim (\mathrm{Equal}_n) - 2 \Big)~.
\]
It is known that $\Dsim (\mathrm{Equal}_n) = \Omega(n)$~\cite{KN}, which concludes the proof.
\end{proof}

Note that the Toffoli gate is also studied in \cite{gosset2025multi}, where they prove the same lower bound along with a matching upper bound. 

\paragraph{Quantum multiplexer.} The $n$-qubit quantum multiplexer computes the following. Let $x \in \{0,1\}^n$, let $i$ be a $\lceil \log_2 n \rceil$-bit index, and let $b \in \{0,1\}$. Then
\[
    \mathrm{Multiplex}_n: \ket{i,x,b}  \mapsto \ket{i,x_1,\ldots,x_{i-1},b,x_{i+1},\ldots,x_n,x_i}~.
\]
In other words, controlled on the index register $\ket{i}$, the multiplexer swaps the $i$'th bit of $\ket{x}$ with the target register $\ket{b}$. 

Clearly, the multiplexer can be used to compute the index function $\mathrm{Index}_n(i,x) = x_i$ where $x \in \{0,1\}^n$ and $i \in \{0,1\}^{\lceil \log_2 n \rceil}$. We obtain linear lower bounds on the magic-count on mixed Clifford+Magic implementations of the multiplexer.

\begin{lemma}
    We have that
    \[
        \mathcal{M}^{\text{mixed}}_{\epsilon}(\mathrm{Multiplex}_n) \geq \frac{1}{4c_M} \Big( (1 - h(\epsilon)) n - 1 \Big)
    \]
    where $h(\epsilon) = -\epsilon \log \epsilon - (1 - \epsilon)\log(1-\epsilon)$ is the binary entropy function.  Furthermore, we have the upper bound
    \[
    \mathcal{M}^{\text{unitary}}_{\epsilon = 0}(\mathrm{Multiplex}_n) = O(n)~.
    \]
\end{lemma}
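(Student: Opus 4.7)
My plan is to prove the lemma in two parts: a lower bound via reduction to the Index function, and an upper bound via explicit construction.

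For the lower bound, the approach mirrors the Toffoli case. First I would observe that $\mathrm{Multiplex}_n$ computes $\mathrm{Index}_n(x,i) = x_i$ with zero magic overhead: on input $\ket{i,x,0}$ the multiplexer produces $\ket{i,x_1,\ldots,x_{i-1},0,x_{i+1},\ldots,x_n,x_i}$, whose final qubit is $\ket{x_i}$, so measuring that qubit yields $x_i$ deterministically. Any mixed Clifford+Magic circuit for $\mathrm{Multiplex}_n$ therefore also computes $\mathrm{Index}_n$, so $\mathcal{M}^{\text{mixed}}_\epsilon(\mathrm{Multiplex}_n) \geq \mathcal{M}^{\text{mixed}}_\epsilon(\mathrm{Index}_n)$. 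Applying \Cref{thm:mixedlowerbound} with $f = \mathrm{Index}_n$ gives the bound $\frac{1}{4c_M}(\Rsim^{\pub}_\epsilon(\mathrm{Index}_n) - 2)$, and the remaining step is to substitute the classical lower bound $\Rsim^{\pub}_\epsilon(\mathrm{Index}_n) \geq (1-h(\epsilon))n + 1$. This is a standard information-theoretic fact: after fixing the public randomness, for each $i$ the referee decodes $x_i$ with error at most $\epsilon$, so Fano's inequality gives $H(x_i \mid M_A, M_B, i) \leq h(\epsilon)$, and summing over $i$ together with the chain rule forces Alice's message to carry at least $(1-h(\epsilon))n$ bits of information about $x$, with Bob contributing at least one further bit.

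For the upper bound I would exhibit an exact Clifford+$T$ circuit with $O(n)$ $T$ gates implementing $\mathrm{Multiplex}_n$ in three stages. First, a \emph{demultiplexer} converts the binary-encoded index register $\ket{i}$ (on $\lceil \log_2 n \rceil$ qubits), together with $n$ fresh ancilla qubits initialized to $\ket{0}$, into $\ket{i}\ket{e_i}$, where $e_i$ is the one-hot indicator vector. Second, for each $j \in \{1,\ldots,n\}$ apply a Fredkin gate controlled on the $j$-th one-hot ancilla that swaps $x_j$ with $b$. Third, uncompute the demultiplexer to reset the one-hot ancillas to $\ket{0^n}$. The $n$ Fredkin gates contribute $O(n)$ $T$ gates since each Fredkin uses a constant number. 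The demultiplexer itself is built by a binary tree: at level $k$ the $k$-th bit of $i$, together with the current ``active'' qubit from level $k-1$, routes a $\ket{1}$ token to one of two children via a pair of Toffoli gates per active qubit. Because at level $k$ there is exactly one active qubit and $2^k$ children, the total number of Toffolis across all $\lceil \log_2 n \rceil$ levels telescopes to $2(n-1)$, and each Toffoli uses $O(1)$ $T$ gates, so the demultiplexer and its inverse each cost $O(n)$ $T$ gates.

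The main subtlety is ensuring the demultiplexer truly uses $O(n)$ rather than the naive $O(n \log n)$ $T$ gates one would obtain by computing each indicator $[i = j]$ independently. The key point is that the one-hot outputs are mutually exclusive, so their preparation can be amortized through a binary tree whose total node count is $O(n)$; carefully verifying that this routing cleanly implements the one-hot map and cleanly uncomputes will be the most technical part of the upper bound. The lower bound, by contrast, is a direct application of \Cref{thm:mixedlowerbound} together with the well-known $(1-h(\epsilon))n$ SMP lower bound for Index.
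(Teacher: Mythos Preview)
Your lower-bound argument is essentially the paper's: reduce $\mathrm{Multiplex}_n$ to $\mathrm{Index}_n$ with zero magic overhead and invoke \Cref{thm:mixedlowerbound}. For the final step the paper converts the SMP protocol to a one-way Bob$\to$Alice protocol (Alice, who holds $i$, simulates the referee using Bob's message) and cites Nayak's random-access-code bound for the $(1-h(\epsilon))n$ term; your direct Fano/chain-rule sketch is the same content unpacked. One cosmetic point: in the paper's convention Bob holds $x$ and Alice holds $i$, so it is Bob's message that must carry $(1-h(\epsilon))n$ bits, not Alice's. The ``$+1$'' you add to reconcile the constants is not really justified by ``Bob contributes one further bit,'' but this is an off-by-one that the paper itself is loose about.

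Your upper bound is correct but proceeds by a genuinely different construction. The paper builds a \emph{controlled} multiplexer $\mathrm{cMultiplex}_{2^k}$ recursively: two Toffolis on the global control and the top index bit produce two sub-controls $A_1,A_2$, one for each half of the array; recurse on each half; uncompute $A_1,A_2$. This gives $g(k+1)\le 2g(k)+4$, hence $g(k)=O(2^k)$. Your approach instead fans out $i$ to a one-hot vector $e_i$ via a binary tree of $O(n)$ Toffolis, applies $n$ Fredkin gates controlled on the one-hot bits to swap the selected $x_j$ with $b$, and uncomputes the demultiplexer. Both constructions land at $O(n)$ magic gates. Your version is arguably more modular (it isolates a reusable demux primitive), while the paper's recursion has the side benefit of directly yielding the controlled multiplexer, which is what makes the recursion close; to get the controlled version from your construction you would add one more Toffoli at the root of the tree.
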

\begin{proof}
We first prove the lower bound.
Since the quantum multiplexer can be used to compute the index function, we have $\mathcal{M}^{\text{mixed}}_{\epsilon}(\mathrm{Multiplex}_n) \geq \mathcal{M}^{\text{mixed}}_{\epsilon}(\mathrm{Index}_n)$. By \Cref{thm:mixedlowerbound}, 
\[
    \mathcal{M}^{\text{mixed}}_{\epsilon}(\mathrm{Multiplex}_n) \geq \mathcal{M}^{\text{mixed}}_{\epsilon}(\mathrm{Index}_n) \geq \frac{1}{4c_M} \Big (\Rsim^{\pub}_{\epsilon}(\mathrm{Index}_n)-1 \Big)~.
\]
Now, the $\Rsim^{\pub}_\epsilon$ complexity of the $n$-bit index function is at least the randomized one-way communication complexity (from Bob to Alice) of the index function: any $\Rsim^{\pub}_\epsilon$ protocol for the index function (where Alice receives $i$, Bob receives $x$, and Alice, Bob, referee receive a uniformly random string $r$) can be converted into a one-way protocol where Bob simply sends Alice the message he would've sent to the referee. It is known that the one-way randomized communication complexity of the index function is at least $(1 - h(\epsilon))n$, even with quantum communication~\cite{nayak1999optimal}, since Bob's message to Alice would define a random access code for his input. 

We now prove the upper bound by induction. We actually construct a \emph{controlled} quantum multiplexer
\[
    \mathrm{cMultiplex}_n = \ketbra{0}{0} \otimes I + \ketbra{1}{1} \otimes \mathrm{Multiplex}_n~.
\]
Note that $\mathrm{cMultiplex}_n$ operates on $2+\lceil \log_2 n \rceil + n$ qubits. Clearly, a controlled multiplexer can be used to implement a non-controlled multiplexer (by setting the control qubit to $\ket{1}$).

Suppose that the $2^k$-qubit controlled multiplexer can be implemented with magic-count $g(k)$ where the magic gates have maximum width $3$. Then the $2^{k+1}$-qubit controlled multiplexer can be implemented as follows. Let the control qubit be denoted $\mathsf{C}$, the $(k+1)$-bit index register be denoted $\mathsf{I}_1,\ldots,\mathsf{I}_{k+1}$, the $2^{k+1}$-bit array register be denoted $\mathsf{X}_1,\ldots,\mathsf{X}_{2^{k+1}}$, and the target qubit be denoted $\mathsf{T}$. We describe the circuit.

\begin{enumerate}
    \item Apply X to $\mathsf{I}_1$. Apply a Toffoli controlled on $\mathsf{C}$ and $\mathsf{I}_1$ with target $\mathsf{A}_1$, an ancilla qubit. Apply X to $\mathsf{I}_1$.
    \item Apply a Toffoli controlled on $\mathsf{C}$ and $\mathsf{I}_1$ with target $\mathsf{A}_2$, an ancilla qubit.
    \item Perform the $2^k$-qubit controlled multiplexer with control qubit $\mathsf{A}_1$, index register $\mathsf{I}_2,\ldots,\mathsf{I}_{k+1}$, the first half of the array $\mathsf{X}_1,\ldots,\mathsf{X}_{2^k}$, and the target qubit $\mathsf{T}$.
    \item Perform the $2^k$-qubit controlled multiplexer with control qubit $\mathsf{A}_2$, index register $\mathsf{I}_2,\ldots,\mathsf{I}_{k+1}$, the second half of the array $\mathsf{X}_{2^k+1},\ldots,\mathsf{X}_{2^{k+1}}$, and the target qubit $\mathsf{T}$.
    \item Uncompute the $\mathsf{A}_1,\mathsf{A}_2$ registers.
\end{enumerate}
Intuitively, the ancilla qubits $\mathsf{A}_1$ and $\mathsf{A}_2$ store whether the $2^k$-size controlled multiplexer should be implemented on the left or right half of the array. At most one of these ``half'' multiplexers will be activated. This construction cleanly implements the $2^{k+1}$-qubit controlled multiplexer. The magic-count satisfies
\[
    g(k+1) \leq 2g(k) + 4~.
\]
We also have $g(1) = O(1)$. Thus the magic count of the $2^{k+1}$-size controlled multiplexer is $g(k+1) \leq O(2^{k+1})$, as desired.
\end{proof}



\section{Communication upper bounds from magic depth}

\subsection{The private simultaneous message passing model and NLQC}\label{sec:PSMandNLQCdefinitions}

A key set of tools we make use of to prove our upper bound are techniques from \cite{speelman2015instantaneous}, which were originally used to prove upper bounds on NLQC. 
Specifically, \cite{speelman2015instantaneous} proves the following theorem. 
\begin{theorem}\label{thm:TdepthNLQC}
    Consider a unitary $U_{AB}$ which can be expressed as a Clifford$+T$ circuit, with $T$-depth at most $d$, and which acts on $n$ qubits. Then $U_{AB}$ can be implemented as an NLQC using communication of at most $O((68n)^d)$ bits and at most $O((68n)^d)$ shared EPR pairs. 
\end{theorem}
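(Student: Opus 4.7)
I would prove this by induction on the $T$-depth $d$, exploiting the fact that Cliffords normalize the Pauli group while $T$ gates introduce only a controlled ($S$-type) correction under conjugation by Pauli $X$.

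\textbf{Base case ($d=0$).} A pure Clifford circuit $U_{AB}$ on $n$ qubits admits an NLQC protocol using $O(n)$ EPR pairs and $O(n)$ bits of simultaneous communication. The idea is that Alice teleports her input register to Bob's side (and vice versa for the output) using standard one-way teleportation. Bob applies $U_{AB}$ locally on the received register together with his own input. The teleportation leaves an unknown Pauli $\sigma$ on the teleported input, but since $U_{AB}$ is Clifford, $U_{AB}\sigma U_{AB}^{\dagger}$ is another Pauli string, and the classical teleportation outcomes sent in the simultaneous round let each party apply the appropriate Pauli correction on their half of the output.

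\textbf{Inductive step.} Write a $T$-depth $d$ circuit as $U = U_2 \cdot L \cdot U_1$, where $U_1$ and $U_2$ have $T$-depth at most $d-1$ and $L$ is a single layer of at most $n$ parallel $T$ gates. The key algebraic identity driving the recursion is
\begin{equation}
T\, X^a Z^b = S^a\, X^a Z^b\, T
\end{equation}
(up to a global phase), so applying the $T$ layer to a teleported state produces an unknown Clifford correction $S^{a}$ (with $a \in \{0,1\}^n$ not known until the classical communication arrives) on top of the usual Pauli frame. The technique I would adopt from \cite{speelman2015instantaneous} is, for each $T$ gate, to prepare in advance a non-local ``conditional-$S$'' gadget using shared entanglement, which realizes the correction $S^{a_i}$ once $a_i$ is revealed, and then to feed the corrected state into a recursively constructed NLQC protocol for $U_2$. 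Since the unknown $S$-corrections are themselves Clifford, their conjugation through $U_2$ can be handled by an enlarged but still Clifford-controlled version of $U_2$'s NLQC protocol. This blows up the register size by an $O(n)$ factor per $T$-depth layer, giving the recurrence
\begin{equation}
R(n,d) \;\le\; O(n)\cdot R\bigl(O(n),\, d-1\bigr),
\end{equation}
whose solution is $R(n,d) = O((Cn)^d)$ for an absolute constant $C$. A careful accounting of the EPR pairs consumed per $T$ gate, per conditional-$S$ correction gadget, and per recursive Clifford wrapper pins $C$ down to $68$.

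\textbf{Main obstacle.} The essential difficulty is that NLQC forbids any adaptive back-and-forth: all entanglement consumption and all local operations before the single simultaneous round must be fixed in advance, yet the Clifford corrections $S^{a}$ depend on outcomes that are not known at that point. The resolution is to prepare, entirely from entanglement, a fixed ``pre-branched'' protocol that can realize any of the $2^{|a|}$ correction patterns once $a$ is revealed; this is precisely what forces the $O(n)$ multiplicative overhead at each $T$-depth layer, and it is the step where the $(68n)^d$ scaling becomes unavoidable. Verifying that the nested conditional-Clifford wrappers compose without further blowing up the $T$-depth, and that the $O(n)$ per-layer factor is tight rather than exponential in $n$, is the main technical content of \cite{speelman2015instantaneous} and the piece I would spend most of the proof effort on.
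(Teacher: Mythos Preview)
Your outline captures the right ingredients---the identity $TX^aZ^b = S^a X^a Z^b T$, the Clifford base case via teleportation, and the non-adaptivity obstacle---but the mechanism you propose for the inductive step differs from Speelman's and, as written, has a genuine gap.

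The paper (following \cite{speelman2015instantaneous}) does \emph{not} recurse on NLQC sub-protocols for $U_2$. Instead it processes the circuit layer by layer and tracks, for each wire, the \emph{garden-hose complexity} of the Boolean function (of the two parties' accumulated measurement outcomes) that determines whether a Pauli $X$ or $Z$ correction is pending on that wire. Your ``conditional-$S$ gadget'' is precisely a garden-hose gadget: Lemma~\ref{lemma:GHgrowth} says that if a wire carries $P^{f}\ket{\psi}$ with $GH(f)=k$, then using $2k$ EPR pairs and only simultaneous Bell measurements one can reduce this to $X^gZ^h\ket{\psi}$ with $GH(g),GH(h)=O(k)$. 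The next Clifford layer XORs these per-wire bits across at most $n$ wires, and Lemma~\ref{lemma:xor} bounds the resulting garden-hose complexity by $O(n)$ times the previous maximum. Iterating gives garden-hose complexity $(O(n))^i$ after the $i$-th $T$-layer; summing the EPR cost of all gadgets yields the stated $O((68n)^d)$.

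Your proposal instead says to ``feed the corrected state into a recursively constructed NLQC protocol for $U_2$'', but this cannot be done literally: the $S^a$ correction depends on outcomes unavailable until \emph{after} the single simultaneous round, so there is no corrected state to feed into anything. You recognize this, but your fix---a ``pre-branched'' protocol over the $2^{|a|}$ correction patterns---does not by itself explain why the overhead is $O(n)$ per layer rather than $2^n$. That is exactly the content of Lemmas~\ref{lemma:GHgrowth} and~\ref{lemma:xor}; without the garden-hose machinery, the $O(n)$ factor in your recurrence is unjustified. This is the missing idea.

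A secondary issue: even granting your recurrence $R(n,d)\le O(n)\cdot R(O(n),d-1)$ with $R(n,0)=O(n)$, the solution is $\Theta(n^{d+1})$, which is not $O((68n)^d)$ for large $n$. Speelman's accounting avoids the extra factor of $n$ because the cost at each layer is the EPR consumption of that layer's gadgets (summed), not a multiplicative factor applied to a recursive call on an $n$-qubit sub-protocol.
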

It will be useful later to introduce the key techniques used in \cite{speelman2015instantaneous} to prove this theorem. 

One idea used in the proof of this theorem is the garden-hose model \cite{buhrman2013garden}.
The garden-hose model is most easily described in terms of the following setting. 
Alice and Bob are neighbours, and share a fence. 
Alice has an input string $x\in\{0,1\}^n$, while Bob has an input string $y\in\{0,1\}^n$. 
Alice has a tap, which she can turn on to produce a flow of water. 
Alice and Bob share a number of pipes which connect their yards, and they have hoses that they can use to connect pipes to one another, or to connect the tap to a pipe. 
Alice and Bob wish to compute a Boolean function $f(x,y)$, with the outcome determined by where the water spills. 
Typically, the model is defined so that water spilling on Alice's side indicates $f(x,y)=0$, while water spilling on Bob's side indicates $f(x,y)=1$. 
The garden-hose model can also be formalized in terms of path connectivity in particular form of graph, see \cite{buhrman2013garden}, though we won't introduce this formalization here.
The minimal number of pipes needed in a garden-hose protocol that computes $f(x,y)$ is the garden-hose complexity of $f$, which we denote by $GH(f)$. 

In the quantum context the garden-hose model appears as a description of concatenated teleportations in some settings. 
In particular, consider an unknown quantum state $\ket{\psi}$, which plays the role of the tap in the garden-hose description. 
Alice and Bob share a set of EPR pairs between them, which play the role of the pipes. 
Alice and Bob can then make Bell basis measurements, which act on either two ends of EPR pairs they hold in their own labs, or (in Alice's case) on the input state plus the end of one EPR pair.
To see why the water-flow analogy of the garden-hose model is relevant, consider that after the input state is measured with one EPR pair, the state has moved to the other end of the EPR pair, up to Pauli corrections. 
Each subsequent measurement moves the state to the other end of the measured EPR pair. 
To an observer with access to the measurement outcomes, it is as if the state is flowing along the path determined by the pipes in the garden-hose picture. 

The following lemma related to the garden-hose model is needed in the proof of theorem \ref{thm:TdepthNLQC}.
The lemma is proven in \cite{speelman2015instantaneous}.
\begin{lemma}\label{lemma:GHgrowth}
    Let $f$ be a Boolean function with garden-hose complexity $GH(f)$. Suppose Alice initially has the state $P^{f(x,y)}\ket{\psi}$ where $x$ is known to Alice and $y$ is known to Bob.
    Then the following two statements hold:
    \begin{enumerate}
        \item There exists an instantaneous protocol (no communication) which uses $2GH(f)$ EPR pairs after which Alice holds $X^{g(\hat{x})}Y^{h(\hat{x})}\ket{\psi}$, where $\hat{x}$ consists of $x$ and $2GH(f)$ bits that describe Alice and Bob's measurement outcomes. 
        \item The garden hose complexities of $g$ and $h$ are at most linear in the complexity of $f$, 
        \begin{align}
            GH(g) &\leq 4GH(f)+1, \nonumber \\
            GH(h) &\leq 11GH(f)+2.
        \end{align}
    \end{enumerate}
\end{lemma}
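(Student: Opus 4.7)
The plan is to implement a conditional $P^{-1}$ correction on Alice's state without communication, by combining a garden-hose teleportation network with a $P^{-1}$ magic-state gadget placed on Bob's side. The key observation is that $P$ is Clifford with $PXP^{-1}=Y$ and $PZP^{-1}=Z$, so any Pauli correction pushed through a $P^{\pm 1}$ gadget remains Pauli, and the final residual correction on Alice's state can always be written in the $\{X,Y\}$ basis, producing the promised form $X^{g}Y^{h}\ket{\psi}$ rather than the usual $X^{a}Z^{b}\ket{\psi}$.

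For part (1), I would use $2GH(f)$ EPR pairs arranged as two back-to-back garden-hoses for $f$. The first layer, of $GH(f)$ EPR pairs, is the standard quantum garden-hose: Alice feeds $P^{f(x,y)}\ket{\psi}$ into the tap, and all Bell measurements (parameterized by $x$ on Alice's side and $y$ on Bob's side) are performed simultaneously, routing the state to Alice's side if $f(x,y)=0$ and to Bob's side if $f(x,y)=1$. Bob loads his designated output node with the magic state $P^{-1}\ket{+}$, so that in the $f=1$ branch gate teleportation through this ancilla applies $P^{-1}$, exactly cancelling the incoming $P$. A second layer of $GH(f)$ EPR pairs, arranged as a reverse garden-hose for $f$, returns the state to Alice regardless of the branch taken. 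Every measurement is local and performed in parallel, so the protocol is truly instantaneous; tracking Pauli propagation through both layers and through the $P^{-1}$ gadget leaves Alice with $X^{g(\hat{x})}Y^{h(\hat{x})}\ket{\psi}$, where $g,h$ depend on the $2GH(f)$ Bell outcomes together with $x$ and $y$.

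For part (2), the plan is to write each correction bit as an explicit Boolean combination of Bell outcomes and a ``branch indicator'' (the value of $f$), and then to construct garden-hose protocols evaluating these combinations. The $X$-correction bit $g$ arises mainly from $Z$-components of outcomes, which propagate trivially through the Clifford parts of the gadget, so only a single auxiliary evaluation of $f$ is needed, giving $GH(g)\le 4GH(f)+1$. The $Y$-correction bit $h$ is active only in the $f=1$ branch and must additionally track $X$-components converted to $Y$ by $P^{-1}$, producing several cross-terms each of which requires its own garden-hose subroutine for $f$; carefully counting these subroutines yields $GH(h)\le 11GH(f)+2$.

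The main obstacle is the bookkeeping in part (2). Part (1) is a guided assembly, but deriving exact Boolean formulas for $g$ and $h$ and showing each has a garden-hose protocol of the claimed pipe-count requires careful tracking of Pauli propagation across every Bell measurement in both garden-hose layers, and through the $P^{-1}$ gadget. I expect the proof to introduce a small library of garden-hose primitives (copying a garden-hose, XOR-ing two garden-hose protocols, and conditioning on the output of $f$) and to assemble $g$ and $h$ from these primitives, with most of the effort going into verifying the final constants $4$ and $11$.
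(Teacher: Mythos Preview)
The paper does not prove this lemma; it explicitly states ``The lemma is proven in \cite{speelman2015instantaneous}'' and uses it as a black box. So there is no in-paper proof to compare against.

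That said, your high-level outline is consistent with Speelman's actual construction: two stacked copies of the garden-hose for $f$, a $P^{-1}$ correction inserted on Bob's side in the $f=1$ branch, and careful Pauli bookkeeping exploiting that $P$ is Clifford with $PXP^{-1}=Y$. One point to tighten: you speak of Bob's ``designated output node,'' but in a generic garden-hose protocol the water can spill at any of several open pipe ends on Bob's side, and which one depends on both $x$ and $y$. Speelman handles this not via a single magic-state ancilla but by having Bob apply $P^{-1}$ to \emph{every} open pipe end on his side before feeding them into the second (return) garden-hose; only the end actually carrying the state matters, and the rest are harmless. Your gate-teleportation-through-a-single-ancilla picture would need an extra routing step to funnel all possible Bob-side outputs to one location, which would inflate the pipe count beyond $2GH(f)$. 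The rest of your plan for part (2), assembling $g$ and $h$ from XOR and conditioning primitives and then counting pipes, is exactly how the constants $4$ and $11$ arise in the original argument.
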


We also need the following lemma from~\cite{speelman2015instantaneous}.

\begin{lemma}\label{lemma:xor}
	Let $f_1,\ldots,f_m$ be Boolean functions and $c\in\{0,1\}$ be any bit. Then, for $f=f_1\oplus\ldots\oplus f_m\oplus c$, we have $GH(f)\le 4\sum_{i=1}^m GH(f_i)+1$.  
\end{lemma}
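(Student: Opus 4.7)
The plan is to induct on $m$, exploiting a generic XOR-composition rule for garden-hose protocols together with the fact that negating a function is cheap in the garden-hose model.

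\textbf{Base case ($m=1$).} We have $f = f_1 \oplus c$. If $c=0$, then $GH(f)=GH(f_1) \le 4GH(f_1)+1$ trivially. If $c=1$, then $f = \neg f_1$, and a garden-hose protocol for $\neg f_1$ can be obtained from one for $f_1$ at an additive cost of $O(1)$ pipes: one adds a small fixed gadget whose effect is to swap the role of ``spill on Alice's side'' with ``spill on Bob's side''. This is well within $4GH(f_1)+1$.

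\textbf{Inductive step.} Assume the bound for $m-1$ functions, and set $g \defi f_1 \oplus \ldots \oplus f_{m-1} \oplus c$, so that $f = g \oplus f_m$. By induction there is a garden-hose protocol $P_g$ for $g$ with $GH(g)\le 4\sum_{i=1}^{m-1} GH(f_i)+1$ pipes. The key claim is that, given any garden-hose protocol $P_h$ for a Boolean $h$, one can build a protocol for $g \oplus h$ using $GH(g) + 4GH(h)$ pipes. The construction: run $P_g$ first, and re-route the water, which ends on Alice's side exactly when $g=0$ and on Bob's side exactly when $g=1$, into the ``tap'' of a follow-up subprotocol. On the Alice-side branch the follow-up is a copy of $P_h$, and on the Bob-side branch it is a copy of $P_{\neg h}$. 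In either case the final spill location encodes $g \oplus h$. Instantiating $h=f_m$ and combining with the inductive hypothesis yields the claim.

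\textbf{Main obstacle.} The main technical issue is the bookkeeping of pipes in the composition step and, in particular, verifying that the factor in front of $GH(f_m)$ is indeed $4$ rather than something larger. One must check (i) that $P_{\neg f_m}$ can be obtained from $P_{f_m}$ using at most $GH(f_m) + O(1)$ pipes, and (ii) that the re-routing of spill endpoints is implemented by local hose connections on each player's side, so that no pipes beyond those of $P_{f_m}$ and $P_{\neg f_m}$ are introduced. Under both of these, the two follow-up subprotocols account for at most $2GH(f_m)+O(1)\le 4GH(f_m)$ pipes, and the additive $+1$ is inherited from the base case rather than accumulating through the induction, giving the final bound $4\sum_{i=1}^{m}GH(f_i)+1$.
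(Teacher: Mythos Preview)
The paper does not give its own proof of this lemma; it is cited from \cite{speelman2015instantaneous}. So there is no in-paper argument to compare against.

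Your inductive strategy is natural, but the composition step has a genuine gap. A garden-hose protocol $P_g$ with $k$ pipes does not in general have a single spill point on each side: Alice may have many open pipe-ends (any odd number up to $k{+}1$ when the tap is counted) and Bob may have many as well, with the particular open ends depending on the inputs. Your sentence ``re-route the water \dots\ into the tap of a follow-up subprotocol'' tacitly assumes a unique exit per side, and claim (ii)---that local hose connections can funnel all spill endpoints into a single follow-up copy of $P_{f_m}$ or $P_{\neg f_m}$---is false. Hoses join endpoints pairwise; connecting open ends $a_1,a_2$ on Alice's side does not ``merge'' them but sends water that would have spilled at $a_1$ back into the pipe system via $a_2$, altering the computation. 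Done honestly, your construction needs one copy of $P_{f_m}$ (resp.\ $P_{\neg f_m}$) per open end of $P_g$, which gives a multiplicative bound of order $GH(g)\cdot GH(f_m)$ rather than the additive $GH(g)+4GH(f_m)$ you claim. Even if one granted the step once, the composed protocol again has many open ends, so the induction does not close.

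The argument in \cite{speelman2015instantaneous} avoids this by first putting each $P_{f_i}$ into a two-track ``switching'' form with a fixed, input-independent pair of terminals on each side: the gadget passes the two tracks straight through when $f_i=0$ and swaps them when $f_i=1$. Building such a gadget costs roughly four copies of $P_{f_i}$ (whence the factor $4$), and once in this form the gadgets chain end-to-end with no branching; the $+1$ handles the constant $c$. Your proof could be salvaged by strengthening the inductive hypothesis to maintain this two-terminal invariant, but then the real work becomes showing each $f_i$ can be brought into that form at cost $\le 4\,GH(f_i)$, which is exactly the content you are currently assuming away.
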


\subsection{Transforming \texorpdfstring{$\Qent$}{TEXT} protocols into \texorpdfstring{$\PSM^*$}{TEXT} protocols}\label{sec:Q||toPSM}
Consider an arbitrary $\Qent$ protocol. We can view the referee's actions as first applying a unitary $U$ and then measuring the first qubit to determine $f(x,y)$. In this section, we show a technique to convert such protocols into $\PSM$ protocols. When $U$ has low $T$-depth, this transformation will be efficient.

\begin{theorem}\label{thm:Q||toPSM}
    Consider an $\epsilon$-correct $\Qent$ protocol for function $f$, which uses $m$ qubits of message. 
    Suppose that this protocol involves the referee applying a $T$-depth-$d$ unitary to the messages received from Alice and Bob, along with at most $a$ qubits of ancilla and then measuring the first qubit to return the output. 
    Then there is a $\PSM_{\epsilon, \delta=2\epsilon}^*$ protocol for $f$ which uses $O((68(m+a))^d)$ qubits of communication and entanglement.
\end{theorem}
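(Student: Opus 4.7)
The plan is to follow the sketch given in the introduction: Alice and Bob run their local $\Qent$ encodings, concentrate the referee's intended input on Alice's side by teleportation, use the Clifford$+T$ NLQC technique of \Cref{thm:TdepthNLQC} to let Alice simulate the referee's $T$-depth-$d$ unitary $U$ up to a Pauli correction whose $X$-component on the output qubit is a fixed parity $p$ of the accumulated teleportation outcomes, and then have Alice measure and forward a single bit while Bob forwards his outcomes. The referee XORs everything together and outputs the result. This combines the privacy of teleportation (uniform outcomes) with the single-bit leakage of Alice's measurement, giving a $\PSM^*$ protocol.

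In more detail, write the referee's unitary as $U = C_d\,(T_d\otimes I)\, C_{d-1}\cdots(T_1\otimes I)\,C_0$ where each $C_i$ is Clifford and each $T_i$ is a layer of parallel single-qubit $T$ gates, acting on $n \defi m+a$ qubits. Alice and Bob each produce their $\Qent$ messages $m_A$ and $m_B$; Bob teleports $m_B$ to Alice using $|m_B|$ EPR pairs, producing $2|m_B|$ uniformly random outcomes $\vec r_B^{(0)}$ and leaving Alice with the referee's intended input up to a known (to Bob) Pauli $P_0$. Alice appends the ancilla register. Because Paulis commute through Cliffords by an efficient update, Alice can apply $C_0$ freely, reducing the task to applying a $T$-layer to a Pauli-corrupted state whose correction is a Boolean function of $\vec r_B^{(0)}$. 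This is exactly the scenario of \Cref{lemma:GHgrowth}: using $O(GH)$ fresh EPR pairs and a garden-hose gadget, Alice implements the $T$-layer instantaneously, leaving a new Pauli correction whose description is a Boolean function of the new teleportation outcomes with garden-hose complexity at most a constant factor larger (via \Cref{lemma:xor} for the parity combinations across qubits in the layer). Iterating this over the $d$ layers, exactly as in the proof of \Cref{thm:TdepthNLQC}, yields total communication and entanglement $O((68n)^d) = O((68(m+a))^d)$. After the final $C_d$ Alice holds $P' U\rho U^\dagger P'^{\dagger}$, where $P'$ is a Pauli whose identity is a fixed Boolean function of all accumulated outcomes $\vec r$. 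Alice measures the first qubit in the computational basis, obtaining $b = b^* \oplus p(\vec r)$, where $b^*$ is the outcome the original referee would have produced and $p$ is the parity extracted from the $X$-part of $P'$ on the first qubit ($Z$-parts are irrelevant for a computational-basis measurement).

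The $\PSM^*$ messages are: from Alice, her bit $b$ together with the teleportation outcomes $\vec r_A$ she produced; from Bob, his teleportation outcomes $\vec r_B$. The referee, who knows the circuit and hence knows $p$ as a function of $\vec r = (\vec r_A,\vec r_B)$, outputs $b\oplus p(\vec r)$, recovering $b^*$, which equals $f(x,y)$ with probability at least $1-\epsilon$ by the correctness of the original $\Qent$ protocol. For security, teleportation outcomes are uniformly distributed on every input, so the joint law of $\vec r$ is identical for every $(x,y)$, while $b$ is a single bit that equals $f(x,y)\oplus p(\vec r)$ except on the event that the original protocol errs, which has probability at most $\epsilon$. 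The simulator, given $f(x,y)$, samples $\vec r$ uniformly at random and sets Alice's bit to $f(x,y)\oplus p(\vec r)$; by a coupling argument on the error event (which contributes at most $\epsilon$ in total variation to Alice's bit and at most another $\epsilon$ when lifting from the classical outcome to the message state), the trace distance to the real message distribution is at most $2\epsilon$, giving $\delta=2\epsilon$.

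The main obstacle is checking, at each layer of the iterated \Cref{lemma:GHgrowth} construction, that the residual Pauli on the first qubit really does collapse to a single classical parity of the cumulative teleportation outcomes, so that Alice leaks only one bit rather than many; and that the per-layer growth of garden-hose complexity, together with the Pauli-update rule at each $T$ layer, compounds in precisely the $(68n)^d$ fashion of \Cref{thm:TdepthNLQC}. Once that structural lemma is in hand, the correctness and the privacy claim follow by the straightforward coupling described above, and the communication and entanglement bounds are immediate from the same recursion.
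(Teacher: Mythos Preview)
Your proposal is correct and follows essentially the same approach as the paper: teleport Bob's message to Alice, have Alice run the referee's Clifford$+T$ circuit layer by layer using the garden-hose gadgets of \Cref{lemma:GHgrowth} and \Cref{lemma:xor} (with the $(68(m+a))^d$ accounting inherited from \Cref{thm:TdepthNLQC}), then send all Bell outcomes plus Alice's single measured bit, and argue privacy from the uniformity of teleportation outcomes together with the fact that the one remaining bit is $2\epsilon$-close to depending only on $f(x,y)$. The only minor point is that your coupling justification for $\delta=2\epsilon$ is phrased loosely; the paper simply computes the trace distance directly as $2(1-\alpha(x,y))\le 2\epsilon$, which is what your simulator yields as well.
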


\begin{proof}
To begin, suppose Alice and Bob have already executed their own actions in the $\Qent$ protocol, and now hold message system $M_A$ and $M_B$. 
Instead of sending those message systems to the referee, Alice keeps $M_A$ and Bob teleports $M_B$ to Alice without revealing the Pauli corrections. 
Alice will now attempt to execute the unitary $U_{M_AM_BE}$ that would otherwise be executed by the referee, where $E$ is some advice system introduced by the referee. 
The issue with this is that Alice only has Bob's state up to Pauli corrections. 
To deal with these Pauli corrections, we will use the Clifford$+T$ decomposition of $U$ and track how the Pauli corrections evolve through the layers of the circuit.
    
To start with, Alice executes the first Clifford$+T$ layer. 
The initial teleportation done by Bob leads to Pauli corrections on the inputs, which conjugate to a potentially different Pauli corrections after the first Clifford circuit -- but these are known to Bob since he knows the circuit. We now see how these Pauli corrections pass through the layer of $T$ gates. From the relations
\begin{align}
        TX=PXT,\qquad TZ = ZT
\end{align}
we see that the Pauli corrections commute through while potentially incurring $P$ gate corrections\footnote{$P$ here denotes the phase gate, $P=\begin{pmatrix}1 &0 \\ 0 & i\end{pmatrix}$.}. We will exchange these $P$ corrections for Pauli corrections using \Cref{lemma:GHgrowth}, at the expense of creating somewhat more complex Pauli corrections.
    
In more detail, the first part of \Cref{lemma:GHgrowth} shows that we can use a garden-hose gadget to undo the conditional $P$ gates. 
Initially, the function that determines whether there is a $P$ correction to be done has garden-hose complexity 1, as it is completely known to Bob. 
By~\Cref{lemma:GHgrowth}, the resulting Pauli corrections after applying the first Clifford$+T$ layer are also constant. 
As a result, Alice has implemented the first Clifford$+T$ layer, up to Pauli corrections of constant garden-hose complexity.

Next, Alice needs to apply the second Clifford$+T$ layer. 
She begins by first applying the needed Clifford circuit. 
The Pauli corrections from the previous round commute through the Clifford and transform into new Pauli corrections. 
Whether there is a particular Pauli correction or not after the Clifford depends on the XOR of a subset of the corrections appearing before the Clifford, which we argued had constant garden-hose complexity. 
By~\Cref{lemma:xor}, these new corrections have garden-hose complexity at most $4(m+a)$, since there are at most $m+a$ qubits. 
Now Alice applies the layer of $T$ gates. 
These again potentially lead to $P$ corrections on the wires after the $T$ gates. 
These are corrected similarly to before, using garden-hose gadgets, whose complexity are now $O(m+a)$. 

One can continue in this way, applying Clifford$+T$ layers and handling $P$ corrections using increasingly expensive garden-hose gadgets. 
The accounting for the total entanglement cost of these gadgets matches the cost when implementing the full unitary $U$, and so is as given in \Cref{thm:TdepthNLQC}: the total cost is $O((68(m+a))^d)$, where $m$ is the number of qubits of message and $a$ is the number of qubits of advice used by the referee.
In our case, after the final Clifford$+T$ layer, we apply an additional Clifford layer and then measure a single qubit. 

Alice now sends all of her measurement outcomes, from both measuring the final qubit and her Bell basis measurements made in the execution of the garden-hose gadgets, to the referee. 
Bob sends all of his measurement outcomes, all of which come from Bell basis measurements. 
Alice throws away all of her unmeasured qubits. From the Bell basis measurements, the referee can determine if there was a Pauli $X$ correction on the final measured qubit or not, and hence learns the corrected measurement outcome. 
This is $\epsilon$ correct if the $\Qent$ protocol was $\epsilon$ correct. 
    
At this point, we have already shown that a $\Qent$ protocol with a referee that acts in constant $T$ depth may be efficiently transformed into an $\Rent$ protocol.
To show this is in fact a $\PSM^*$ protocol, we need to show $\delta$ security. 
For this, we consider that all of the bits sent to the referee were from Bell basis measurements, call them $\vec{r}=(r_1,...,r_k)$, except one bit $s$, which came from Alice measuring a single qubit of output of $U_{M_AM_BE}$. We first observe that the Bell basis measurement outcomes $\vec{r}$ are distributed as a uniformly random bit-string in $\{0,1\}^{|r|}$. 
To design a simulator, consider that the message is of the form
\begin{align}
    \rho_M(x,y)&=\frac{1}{2^{|r|}}\sum_r X^{p(r)} \sigma(x,y)X^{p(r)}\otimes \ketbra{r}{r}, \nonumber \\
    \sigma(x,y) &= \alpha(x,y) \ketbra{f}{f} + (1-\alpha(x,y)) \ketbra{f\oplus 1}{f\oplus 1}.
\end{align}
Here $p(r)$ is a parity function which determines if there is a Pauli $X$ correction on the measured qubit. 
The probabilities $\alpha(x,y)$ can in general leak information about $(x,y)$, but we have that $\alpha(x,y) \geq 1-\epsilon$ for all $(x,y)$ which will ensure this leaked information is small. 
In particular we define the simulator distribution to be
\begin{align}
    \text{Sim}(f) = \frac{1}{2^{|r|}}\sum_r X^{p(r)} \ketbra{f}{f}  X^{p(r)}\otimes \ketbra{r}{r}.
\end{align}
Then to check security, we just need to calculate the trace distance between the message distribution and the simulator distribution, 
\begin{align}
    \Vert \rho_M(x,y) - \text{Sim}_M(f) \Vert_1 &= \vabs{\frac{1}{2^{|r|}}\sum_r X^{p(r)}(\sigma(x,y)-\ketbra{f}{f})X^{p(r)}\otimes \ketbra{r}{r} }_1 \nonumber \\
    &= \frac{1}{2^{|r|}}\sum_r \Vert \sigma(x,y)-\ketbra{f}{f} \Vert_1 \nonumber \\
    &= \frac{1}{2^{|r|}} \sum_r \|(\alpha(x,y)-1)\ketbra{f}{f} + (1-\alpha)\ketbra{f\oplus 1}{f\oplus 1} \|_1\nonumber \\
    &\leq \frac{1}{2^{|r|}} \sum_r 2 |1-\alpha(x,y)| \nonumber \\
    &\leq 2\epsilon
\end{align}
so that the protocol is $\delta=2\epsilon$ secure, as claimed. 
\end{proof}

\begin{remark} We remark that our simulation of $\Qent$ by $\Rent$ works for relational problems as well, without the privacy condition -- the only difference is that Alice will send the referee the measurement outcomes of a subset of qubits as opposed to a single qubit.
\end{remark}

\begin{remark}
 \Cref{thm:Q||toPSM} also gives a $T$-depth lower bound of 
    \begin{align}
        T\text{-depth}(\Lambda_f) = \Omega\left(\frac{\PSM^*(f)}{\Qent(f)}\right)
    \end{align}
    where $T\text{-depth}(\Lambda_f)$ denotes the $T$-depth of any circuit which implements the measurement $\Lambda$ applied by the referee in the $\Qent$ protocol. 
    This relates the second of Gavinsky's problems mentioned in the introduction to the problem of proving $T$-depth lower bounds: a separation between $\Rent$ and $\Qent$ (and hence between $\PSM^*$ and $\Qent$) would prove a $T$-depth lower bound on referee's measurement in the $\Qent$ protocol. 
\end{remark}

\subsection{Separating \texorpdfstring{$\Rent$}{TEXT} and \texorpdfstring{$\Rtwo$}{TEXT}}

In this section, we describe the problems used by~\cite{girish2022quantum,arunachalam2023one} to separate $\Qent$ and $\Rtwo$ and describe how the referee's actions in the $\Qent$ protocols can be implemented with constant $T$-depth. This along with our results immediately implies a similar separation between $\Rent$ and $\Rtwo$.

\paragraph*{Forrelation.} 

We will now describe the Forrelation-based approach used by~\cite{girish2022quantum} to separate $\Qent$ and $\Rtwo$. Let $n$ be a power of 2. Define the forrelation of a string $x\in\{-1,1\}^{n}$ as
\begin{align*}
    \forr(x) := \frac{1}{n}\bra{x_1}H^{\otimes n}\ket{x_2}
\end{align*}
where $x_1$ is the first half of $x$ and $x_2$ is the second half of $x$. Define a communication problem as follows.
\begin{definition}
Alice gets $x\in\{-1,1\}^n$ and Bob gets $y\in \{-1,1\}^n$, where $n$ is a power of 2. The goal of the players is to output $f(x,y)$ defined by
\begin{align*}\label{def:forrelation}
    f(x,y) = \begin{cases}
        -1 \quad \text{if}\,\, \forr(x\cdot y)\geq \alpha \\
        +1 \quad \text{if}\,\, \forr(x\cdot y)\leq \alpha/2
    \end{cases}
\end{align*}
where $\alpha>0$ is a constant. Here, $x\cdot y$ denotes the point-wise product of $x$ and $y$.
\end{definition}

A variant of this problem with $\alpha=\Theta(1/\log N)$ was originally studied by~\cite{girish2022quantum} who used it to separate $\Qent$ and $\Rtwo$. For a small constant $\alpha$, this problem was studied by~\cite{girish2025} who showed an $\Rtwo$ lower bound of $\tilde{\Omega}(n^{1/4})$, as well as a $\Qent$ upper bound of  $O(\log n)$ where the referee's actions can be implemented in $T$-depth 2. This gives us the desired result. 

\paragraph*{ABCD Problem.}
\begin{definition}
    Alice gets $A,C\in \mathrm{SU}(n)$ and Bob gets $B,D\in \mathrm{SU}(n)$ and their goal is to output $f(x,y)$ defined by
\begin{align*}
    f(x,y) = \begin{cases}
        -1 \quad \text{if}\,\, \Tr(ABCD)\geq 0.9n \\
        +1 \quad \text{if}\,\, \Tr(ABCD)\leq 0.1n
    \end{cases}
\end{align*}
\end{definition}
It was shown by~\cite{arunachalam2023one} that this problem requires $\Omega(\sqrt{n})$ communication in the $\Rtwo$ model. We will revisit their $\Qent$ upper bound of $O(\log n)$. 

Alice and Bob share $\log n +1$ EPR pairs. Alice applies $\begin{bmatrix}
A & 0 \\ 0 & C
\end{bmatrix}$ to her part of the state and Bob applies 
$\begin{bmatrix}
B^\dagger & 0 \\ 0 & D^\dagger
\end{bmatrix}$ to his part and they send all their qubits to the referee. 
The referee first does a CNOT on the first two qubits, applies a controlled swap operator between the last two sets of registers controlled on the first register and finally measures the first qubit in the Hadamard basis -- this is depicted in~\Cref{fig:3}. It was shown in~\cite{arunachalam2023one} that the probability with which the referee outputs 1 is at least $0.95$ if $\Tr(ABCD)\ge 0.9n$ and at most $0.55$ if $\Tr(ABCD)\le 0.1n$ and hence, this protocol solves the ABCD problem.

We will now show how to implement the referee's actions in constant $T$-depth. The first gate is CNOT, a Clifford. Before applying each subsequent CSWAP, the referee copies the control qubit onto $\log n$ different ancillary qubits in  the $\ket{0}$ state using CNOTs (Clifford operations). This allows her to then implement all the CSWAPs in parallel. Finally, each CSWAP can be implemented in constant Toffoli depth as depicted in~\Cref{fig:4}, and a Toffoli gate can be implemented with $T$-depth 1~\cite{Tdepthone}. 
Altogether, we obtain a circuit for the referee's actions that acts on $O(\log n)$ qubits and has $T$-depth 1, as desired.  

\begin{figure}
\centering
\begin{subfigure}{0.49\textwidth}
\centering
\scalebox{1.3}{
\mbox{
\Qcircuit @C=1em @R=1em  {
& \targ& \qw & \qw & \qw & \qw \\ 
& \ctrl{-1} & \ctrl{1} & \ctrl{2} & \ctrl{3} & \gate{H} & \meter & \qw \\
& \qw & \qswap & \qw & \qw & \qw\\
& \qw & \qw & \qswap  &\qw & \qw \\
& \qw & \qw & \qw &\qswap &\qw \\
& \qw & \qswap \qw \qwx[-3] & \qw & \qw & \qw \\
& \qw & \qw   &\qswap \qwx[-3] & \qw & \qw \\
& \qw & \qw    & \qw &\qswap \qwx[-3] & \qw \\
}}}
\caption{}\label{fig:3}
\end{subfigure}
\begin{subfigure}{0.49\textwidth}
\centering
\scalebox{1.3}{
\mbox{
\Qcircuit @C=1em @R=1em {
& \ctrl{1} & \qw & \\
& \qswap & \qw & \\
& \qswap \qwx[-1] & \qw  & \\
}
}
\mbox{
\Qcircuit @C=0.4em @R=0.4em {
 &\\
 & \\
= & \\
}}\hspace{1em}
\mbox{
\Qcircuit @C=1em @R=0.4em {
& \qw & \ctrl{1} &\qw & \qw\\
& \targ & \ctrl{1} & \targ &\qw \\
& \ctrl{-1} & \targ &  \ctrl{-1} &\qw \\
}}}
\caption{}\label{fig:4} 
\end{subfigure}
\caption{a) The referee's circuit. b) Implementation of CSWAP using a single Toffoli.}
\end{figure}
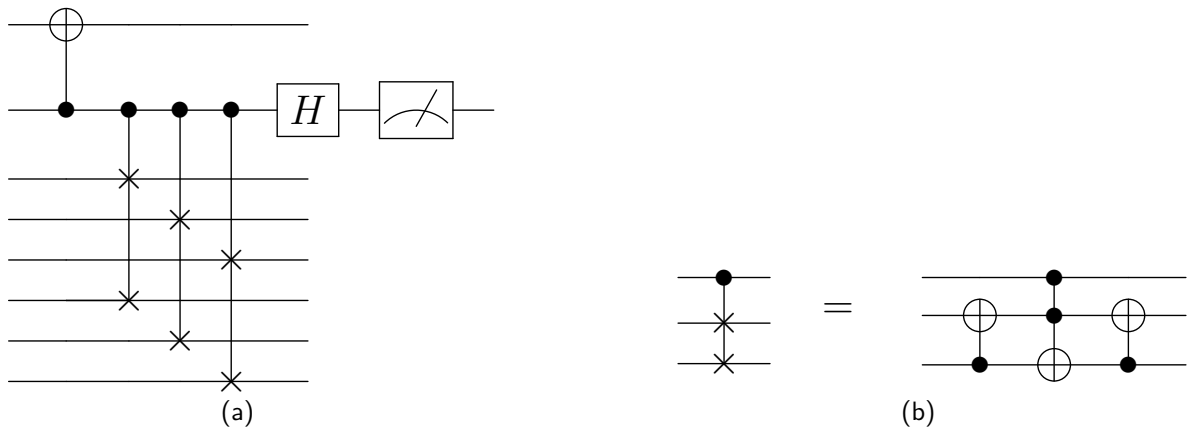

\appendix

\bibliographystyle{unsrtnat}
\bibliography{biblio}

\end{document}